\documentclass[letterpaper,twocolumn]{article}
\usepackage[letterpaper, margin=0.8in, bottom=1in]{geometry}

\title{Ariadne and Theseus: Exploration and Rendezvous \\with Two Mobile Agents in an Unknown Graph}

\author{Romain Cosson \\ \texttt{romain.cosson@inria.fr}}
\date{}

\usepackage[english]{babel}
\usepackage{algorithm}
\usepackage{algpseudocode}
\usepackage{amsmath}
\usepackage{cases}
\usepackage{graphicx}
\usepackage[dvipsnames]{xcolor}   
\usepackage[utf8]{inputenc} 
\usepackage{url}            
\usepackage{booktabs}       
\usepackage{amsfonts}       
\usepackage{nicefrac}       
\usepackage{microtype}      
\usepackage{tikz}
\usetikzlibrary{arrows,topaths}
\usepackage{mathtools}
\usepackage{amssymb}
\usepackage{amsthm}
\usepackage{amsmath,amssymb}
\usepackage{comment}        
\usepackage{enumitem}  
\usepackage{cite} 
\usepackage{parskip}

\usepackage{natbib}
\bibliographystyle{unsrtnat} 
\usepackage[colorlinks=true, allcolors=blue]{hyperref}

\newtheorem{theorem}{Theorem}[section]
\newtheorem{proposition}[theorem]{Proposition}

\newtheorem{remark}[theorem]{Remark}

\newtheorem{claim}[theorem]{Claim}

\newtheorem*{definition*}{Definition}

\newtheorem*{theorem*}{Theorem}

\newtheorem*{lemma*}{Lemma}





\newcommand{\Ocal}{\mathcal{O}}


\newcommand{\Nbb}{\mathbb{N}}










\newcommand{\ceil}[1]{{\lceil #1 \rceil}}



\setlength{\columnsep}{20pt} 

\begin{document}
\maketitle

\begin{abstract}
We investigate two fundamental problems in mobile computing: exploration and rendezvous, with two distinct mobile agents in an unknown graph. 
The agents may communicate by reading and writing information on whiteboards that are located at all nodes. 
They both move along one adjacent edge at every time-step. 
In the exploration problem, the agents start from the same arbitrary node and must traverse all the edges. 
We present an algorithm achieving collective exploration in $m$ time-steps, where $m$ is the number of edges of the graph. 
This improves over the guarantee of depth-first search, which requires $2m$ time-steps. 
In the rendezvous problem, the agents start from different nodes of the graph and must meet as fast as possible. 
We present an algorithm guaranteeing rendezvous in at most $\frac{3}{2}m$ time-steps. 
This improves over the so-called `wait for Mommy' algorithm which is based on depth-first search and which also requires $2m$ time-steps. 
Importantly, all our guarantees are derived from a more general asynchronous setting in which the speeds of the agents are controlled by an adversary at all times. 
Our guarantees generalize to weighted graphs, when replacing the number of edges $m$ with the sum of all edge lengths. 
We show that our guarantees are met with matching lower-bounds in the asynchronous setting.
\end{abstract}

\subparagraph{Keywords.} mobile computing, distributed communication, collective exploration, rendezvous, asynchrony.

\section{Introduction}
In 1952, Claude Shannon presented an electromechanical mouse capable of finding the exit of a maze embedded in a $5\times 5$ grid. The device was baptised `Theseus' in reference to the mythical hero who must escape an inextricable labyrinth after having killed the ferocious Minotaur. Shannon's mouse is arguably the first example of an autonomous mobile device
\cite{klein2018} and it inspired a number of micro-mouse competitions globally.  

The algorithm used by Shannon's mouse is known today as `depth-first search' (DFS). Its analysis dates back to the 19th-century, making it one of the few algorithms preceding the era of computers \cite{lucas1883recreations}. Depth-first search generalizes the ancient maze-solving heuristic `right-hand-on-the-wall' which can be used in the absence of cycles, i.e. for trees. Given the ability to mark edges (e.g. with a chalk), an agent using depth-first search is guaranteed to traverse each edge once in both directions and then return to the origin. 
In modern terms, we say it achieves graph exploration in $2m$ moves, where $m$ is the number of edges of the graph. The algorithm is optimal in the sense of competitive analysis \cite{miyazaki2009online}.

\subparagraph{Main results.} In the myth, Theseus can count on the help of the ingenious Ariadne. In this paper, we start by studying the question of whether two agents can solve a maze faster than a single agent. 
We answer by the affirmative using the formalism of collective exploration introduced by \cite{fraigniaud2006collective}.
Specifically, our main contribution to this problem is a collective graph exploration algorithm for two agents which requires exactly $m$ time-steps to explore any graph with $m$ edges. It is the first method to provide a quantitative improvement over the guarantee provided by a single depth-first search, answering a question of \cite{brass2014improved}. The algorithm is presented in Section \ref{sec: 2agents}. 

We then consider the problem of `rendezvous' in which the two agents start from different nodes and must meet somewhere in the graph. 
While the problem has attracted a rich body of literature (see the survey of \cite{pelc2019deterministic}) the setting where the graph is unknown and the agents are distinguishable (i.e. they may use different algorithms) has surprisingly received little attention. The best method at hand for this problem is the simple `Wait for Mommy' algorithm in which one agent stays immobile while the other performs a depth-first search, achieving rendezvous in at most $2m$ time-steps. 
Our contribution to this problem is an algorithm achieving rendezvous of two mobile agents in only $\ceil{\frac{3}{2}m}$ time-steps. The algorithm is presented in Section \ref{sec:rdv}. 

We formalize in Section \ref{sec:setting} an asynchronous navigation model in which an adversary chooses at each round the agent which may perform a move. Our guarantees for the synchronous setting are directly derived from more general results which hold for this asynchronous setting. We also introduce the practical formalism of `navigation tables' which could be applicable to other aspects of mobile computing. 

We extend in Section \ref{sec:discussion} the generality of our model of asynchrony to the case where the agents move continuously. We also show that all our guarantees remain valid for weighted graphs, if the number of edges $m$ is replaced by the sum of all edge lengths $L$. Finally, we provide lower-bounds (on the cumulative moves of the agents) of $2L$ for collective exploration and $3L$ for rendezvous, matching the guarantees of our algorithms in this setting. 

\subsection{Related works}
The model of mobile computing considered in this paper corresponds exactly to the classical description of depth-first search \cite{lucas1883recreations}. The graph is unlabelled, the agents have severely limited memory (enough to recall the edge with which they enter a node) and can mark the endpoints of edges (thereafter called \textit{ports} or \textit{passages}) with a constant number of symbols or \textit{makers}. 

\subparagraph{History of Depth-First Search.} 
The first formal presentation of depth-first search for graphs is designed by Trémaux in response to an open problem asking, in eloquent terms, whether there exists a deterministic algorithm for solving mazes: 
\begin{quote}
\textit{Reader, imagine yourself lost in the crossroads of a labyrinth, in the galleries of a mine, in the quarries of the catacombs, under the shady alleys of a forest. You don't have the thread of Ariadne in your hand, and you are in the same situation as Little Tom Thumb after the birds have eaten the breadcrumbs. What can you do to find your way back to the entrance of the labyrinth?
}
\end{quote}
\hfill Edouard Lucas, \textit{The Game of Labyrinths}, \cite{lucas1883recreations}.
\medskip

Trémaux's Depth-First Search can synthetically be described as follows (see Algorithm \ref{alg:tremaux} for details). The searcher (say, Ariadne) always attempts to go through a passage that she has not traversed, but prefers to backtrack rather than to cross her own path. When she is at a node of which all passages have been explored, she exits this node using the passage by which she first discovered it. Shortly after Trémaux, \cite{tarry1895probleme} observed that a slightly different algorithm entails the same guarantees. The specificities of  this variant are discussed in \cite{even2011graph}.

Later, depth-first search received a lot of attention outside mobile computing, due to the growing body of research on other aspects of theoretical computer science (see e.g. \cite{golomb1965backtrack}). The algorithm is usually implemented by stacking all the neighbours of the last discovered node, and applying the method recursively until all nodes have been discovered (and the stack is empty). This implementation leads to a `depth-first search ordering' which corresponds to the order in which the nodes of a graph would be discovered by Trémaux's algorithm ; however it does not illustrate well the application of the method to mazes, since it implicitly assumes that the searcher can `jump' between previously discovered nodes, just like a computer can jump between different addresses in memory in constant time (RAM model). 
Depth-first search orderings are key ingredients of so-called `linear algorithms' for graphs, the study of which was initiated by \cite{tarjan1972depth} to compute the strongly connected components of a directed graph and the biconnected components of an undirected graph. Surprisingly, some fundamental questions about depth-first search orderings remain open today \cite{aggarwal1987random}.

\subparagraph{Single mobile agent in an unknown undirected graph.} We now recall more recent results on the topic of graph exploration with a single agent, and refer to the survey of \cite{das2019graph} for more details. Specifically, we observe that the literature usually varies in the three following modelling choices. 

\textit{Storage.} The terms refers to maximum amount of information that can be stored at any node of the graph (e.g. on a whiteboard) and is quantified in bits. The depth-first search algorithm of \cite{lucas1883recreations} requires $\Ocal(\Delta)$ bits of storage, where $\Delta$ is the maximum degree of the graph. This is because the searcher must store on each node the list of adjacent port numbers that have been explored. Universal exploration sequences, introduced by \cite{reingold2008undirected}, allow to explore a graph with no storage (and $\Ocal(\log(n))$ memory) but only come with a poly($n$) runtime, where $n$ is the number of nodes in the graph. 

\textit{Memory.} The term refers to the maximum on-board memory that the agent may use during navigation. It is also quantified in bits. In a labeled graph (i.e. nodes have a unique identifier\footnote{there is no deterministic exploration algorithm using no memory and no storage for unlabelled graphs.}), $\Ocal(m)$ bits of memory are sufficient to encode the graph, and thus to implement DFS without storage. In the storage-based presentation of depth-first search, it is implicitly assumed that the agent uses $\Ocal(\log \Delta)$ memory because it recalls the edge by which it enters a given node. 
The `rotor router' algorithm allows to explore undirected graphs with zero memory (and $\Ocal(\log \Delta)$ storage) but it requires $\Ocal(Dm)$ steps to go through all edges, where $D$ is the diameter of the graph (i.e. the maximum distance between any two nodes) \cite{yanovski2003distributed,menc2017time}.

\textit{Feedback.} The term refers to the amount of information revealed to the agent when it attains a new node. In the original presentation of depth-first search and in the setting of this paper, an agent gets to see only the port numbers of the edges adjacent to its position (i.e. the agent is short-sighted). A line of work initiated by \cite{kalyanasundaram1994constructing} studies the case where the entire neighbourhood of a node is revealed to the agent when that node is visited. 
This problem is still actively studied, see e.g. \cite{megow2012online,birx2021improved,baligacs2023exploration,akker2024multi}. Another line of work initiated by \cite{papadimitriou1991shortest,fiat1998competitive} considers the case where the set of nodes is partitioned in \textit{layers} and the agent gets to see all nodes and edges adjacent to a layer when it attains some node of that layer. This problem is called `layered graph traversal' and is also subject to recent research \cite{bubeck2022shortest,bubeck2023randomized}.  

In this paper, we shall focus on algorithms which rely on the same (natural) assumptions as the original setting of depth-first search \cite{lucas1883recreations}. Therefore, the agents (Ariadne and Theseus) use $\Ocal(\Delta)$ storage per node, they have $\Ocal(\log \Delta)$ memory (just enough to remember the edge from which they arrive), and no additional feedback (short-sighted agent). The graph is undirected and unlabelled.



\subparagraph{Collective exploration.} Collective exploration studies algorithms for exploring unknown environments with multiple mobile agents. The setting was introduced by \cite{fraigniaud2006collective} for the special case of trees and easily adapts to general graphs \cite{brass2011multirobot}.

Consider an unknown graph $G=(V,E)$.  A team of $k\in \Nbb$ agents initially located at a same node is tasked to traverse all edges of the graph. The agents move synchronously along one adjacent edge at each round. An edge is revealed only when one agent becomes adjacent to that edge.  For an exploration algorithm \texttt{ALG}, we denote by $\texttt{ALG}(G,k)$ the number of rounds the team takes to traverse all edges of $G$. The literature mainly focuses on the analysis of the competitive ratio of \texttt{ALG} defined by,
\begin{equation*}
    \texttt{competitive-ratio}(\texttt{ALG},k) = \max_{G} \frac{\texttt{ALG}(G,k)}{\texttt{OPT}(G,k)},
\end{equation*}
where $\texttt{OPT}(G,k)$ denotes the minimum amount of rounds required by the team to go through all edges of $G$ if the graph was known in advance.

The analysis of the competitive ratio has been quite fruitful for the special case of trees (i.e. when $G$ is assumed to have no cycle). To date, the best competitive ratio is in $\Ocal(k/\log(k))$ if the agents use distributed communication (i.e. they communicate only through storage) \cite{fraigniaud2006collective} and is in $\Ocal(\sqrt{k})$ if the agents are allowed complete communication (i.e. they are controlled by one central algorithm) \cite{cosson2024collective}. Several other collective exploration algorithms have been proposed for trees, e.g. \cite{brass2011multirobot,ortolf2014recursive,cosson2024breaking}. 

On the contrary, very little is known about collective exploration of general graphs. A guarantee of $2m/k + 2n(\ln k +1)$ was proposed by \cite{brass2014improved}, and the case where the number of robots is very large $k\geq Dn^{1+\epsilon}$ for some $\epsilon >0$, was studied by \cite{dereniowski2015fast}. To date, the best competitive ratio remains in $2k$ and is attained by the trivial algorithm that uses a single agent to perform depth-first search and keeps the $k-1$ remaining agents idle at the origin. It was highlighted as an open question by \cite{brass2014improved} whether this guarantee could be improved. This paper answers affirmatively by showing that two agents may go trough all edges of a graph in $m$ rounds, thus improving the competitive ratio from $2k$ to $k$, for any $k\geq 2$ (see discussion in Section~\ref{sec:synchronous}). We note that some special cases of collective graph exploration have also been considered in the literature, such as grid graphs with rectangular obstacles \cite{ortolf2012online, cosson2023efficient} and cycles \cite{higashikawa2014online}.

Finally, we observe that collective exploration can be generalized to the asynchronous setting (i.e. when agents have adversarial speeds) and to weighted graphs (i.e. where the weights represent a cost associated to traversing an edge). Our algorithm and its guarantee adapt to these generalizations. 

\subparagraph{Rendezvous of two mobile agents.} We provide a short overview of the rendezvous problem, and refer to the survey of \cite{pelc2019deterministic} for more details. We note that most of the effort on the problem has been on the question of feasibility of rendezvous, rather than on the runtime (see e.g. \cite{guilbault2011asynchronous}). 
The reason why rendezvous might be infeasible even when the graph is known by the agents, is that it is generally assumed that the agents are indistinguishable (except perhaps by some personal label or by their initial position in the graph) and that they must use the same deterministic algorithm. 
The most obvious example of an infeasible rendezvous is that of two identical agents in the ring \cite{kranakis2003mobile} for which there is no way to break the symmetry. The problem of rendezvous naturally generalizes to asynchronous models of mobile computing \cite{de2006asynchronous,czyzowicz2012meet,dieudonne2013meet}.
In contrast to most previous works, we consider the situation where the agents have distinct algorithms and can communicate by reading and writing on the whiteboards at all nodes (i.e. with storage). This setting is usually treated in the rendezvous literature by the aforementioned `Wait for Mommy' algorithm \cite{pelc2019deterministic}, which requires $2m$ synchronous time steps. Our rendezvous algorithm improves this quantity to $\frac{3}{2}m$. We note that replacing depth-first search by the algorithm of \cite{panaite1999exploring} in the `Wait for Mommy' algorithm leads to a rendezvous in $m+3n$, which is better for graphs with a super-linear number of edges, but this alternative does not provide satisfactory guarantees for weighted graphs and does not deal with asynchrony.

\section{Problem setting and definitions}\label{sec:setting}
\subsection{Navigation model}\label{sec:navigation-model}
The algorithms presented in this paper rely on the assumption that it is possible to read and write information on whiteboards located at all nodes. It will be more convenient to assume that there are multiple (smaller) whiteboards at each node, one for each \textit{passage} adjacent to the given node. A passage (more commonly referred to as a \textit{port}) is defined as the intersection of an edge with one of its endpoints. The denomination is borrowed from \cite{even2011graph}. The agent can choose between a finite number of \textit{markers} to leave information on a passage. Obviously, an exploration algorithm requiring $\Ocal(1)$ storage on each passage entails an associated exploration algorithm requiring $\Ocal(\Delta)$ storage on each node, where $\Delta$ is the maximum degree of the graph, as announced in the introduction. 

\subparagraph{Move of an agent.} The \textit{move} of an agent can be decomposed in the following steps: 
\begin{enumerate}[label=S\arabic*]
    \item The agent reads at its location $u$, it decides whether exploration continues, and if so it chooses a port (or passage) at $u$ denoted $p_u$; \label{step1}
    \item The agent may change the marker of $p_u$;  \label{step2}
    \item The agent uses $p_u$ to traverse the chosen edge;  \label{step3}
    \item The agent arrives at $v$ through port $p_v$, it reads the markers adjacent to $v$;  \label{step4}
    \item The agent possibly changes the marker of $p_v$.  \label{step5}
\end{enumerate}
Note that this decomposition is directly inspired from the simple description of Trémaux's algorithm found in \cite{even2011graph}. In particular, it would not be possible to implement depth-first search, or any linear exploration algorithm, without steps \eqref{step4} and \eqref{step5} (see. \cite{menc2017time}). 

\subparagraph{Synchronous and asynchronous models.} We study two models of mobile computing, the \textit{synchronous model} and the \textit{asynchronous model}. 

\textit{Synchronous model.} In the synchronous model, both agents have one move at each time-step. In particular, they perform steps \eqref{step1} to \eqref{step5} simultaneously. If both agents start the round on the same node, we further assume that they can entirely communicate and coordinate during \eqref{step1} to avoid conflicting choices.

\textit{Asynchronous model.} In the asynchronous model, we assume that an adversary decides at each round which of the two agents will have a move. That agent then performs all steps \eqref{step1} to \eqref{step5} without interruption. The generality of this model of asynchrony is explained in Section \ref{sec:discussion}. In particular, this model will encapsulate the situation where the adversary decides at all times the (continuous) speeds of the agents.

\subsection{Trémaux's algorithm}
We now turn to the formal description of Trémaux's algorithm (Algorithm~\ref{alg:tremaux}) in the model of mobile computing defined above. The agent is able to mark passages using \textit{markers} E (Explored), F (First entry), and B (Backtrack) in place of the default marker $\emptyset$ (unmarked). For the sake of self-containedness, we provide a brief analysis of Trémaux's algorithm and refer to \cite{even2011graph} for more details.  

In pseudo-code and proofs, we shall use the notation $u \xrightarrow{e} v$ to denote the passage (i.e. port) of edge $e=(u,v)\in E$ at node $u\in V$. The value of $v$ becomes known to an agent at $u$ only if it chooses to move along this passage.  

\begin{algorithm}[ht]
\caption{\texttt{DFS}, Trémaux's algorithm}\label{alg:tremaux}
\begin{algorithmic}[1]
\If{there is a passage $u \xrightarrow{e} v$ marked B}
\State Mark $u \xrightarrow{e} v$ by E and traverse $e$
\ElsIf{there is an unmarked passage $u \xrightarrow{e} v$}
\State Mark $u \xrightarrow{e} v$ by E and traverse $e$
\If{$v$ was previously discovered}
\State Mark $v \xrightarrow{e} u$ by B
\Else
\State Mark $v \xrightarrow{e} u$ by F
\EndIf
\ElsIf{there is a passage $u \xrightarrow{e} v$ marked F}
\State Traverse $e$
\Else
\State Declare STOP
\EndIf
\end{algorithmic}
\end{algorithm}

\begin{proposition}[\cite{lucas1883recreations}]\label{prop:tremeaux} For any graph $G$ with $m$ edges, Trémaux's algorithm (Algorithm~\ref{alg:tremaux}) has the agent traverse all edges once in each direction and stop at the origin. 
\end{proposition}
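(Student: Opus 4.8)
The plan is to prove the statement by induction on the execution of the algorithm, tracking an invariant on the state of the markers. First I would set up the key bookkeeping: at any point during the execution, call a passage $u \xrightarrow{e} v$ \emph{used} if it has been traversed at least once. The marker conventions in Algorithm~\ref{alg:tremaux} give a clean reading of the state: a passage marked $\mathrm{E}$ has been traversed (in the outgoing direction), a passage marked $\mathrm{F}$ records the edge by which its node was first discovered (and has been traversed in the incoming direction), a passage marked $\mathrm{B}$ has been traversed in the incoming direction but into an already-discovered node. So the first step is to observe that the marker state of a passage deterministically encodes whether and how it has been traversed, and in particular that a passage is traversed \emph{outgoing} exactly when it receives marker $\mathrm{E}$.

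Next I would establish the central structural invariant. Consider the moment the agent is sitting at a node $u$ just before executing one iteration. I claim that for every discovered node $w \neq u$, exactly one adjacent passage is marked $\mathrm{F}$ — namely the one pointing back toward the node from which $w$ was discovered — and that following these $\mathrm{F}$-passages from any discovered node leads back to the origin along a simple path; moreover at $u$ itself (if $u \neq$ origin) there is also exactly one $\mathrm{F}$-marked passage. In other words, the $\mathrm{F}$-passages form a rooted tree (the DFS tree) pointing toward the origin, and the agent's current position is always a node in this tree. Simultaneously I would track the ``local clearing'' invariant: the branching rule of the algorithm (prefer $\mathrm{B}$, then unmarked, then $\mathrm{F}$, then stop) guarantees that the agent only follows its own $\mathrm{F}$-passage out of $u$ once every other passage at $u$ has been used in the outgoing direction, and that when it does so the passage $u \xrightarrow{e} v$ that was $\mathrm{B}$-marked at the far end (or the parent edge) has by then been traversed in \emph{both} directions. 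The bulk of the work is checking that each of the four branches of the \texttt{If} preserves these invariants; the $\mathrm{B}$/$\mathrm{F}$ interplay (an edge gets a $\mathrm{B}$ at one end on the first crossing, and that $\mathrm{B}$ is later consumed, turning into $\mathrm{E}$, forcing the second crossing) is what ensures every edge is eventually traversed in both directions rather than left half-used.

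From the invariant the conclusion follows quickly. Termination: each iteration either creates a new $\mathrm{E}$ mark on a previously unmarked passage, or consumes a $\mathrm{B}$ mark (turning it to $\mathrm{E}$), or traverses an $\mathrm{F}$ passage exactly once per node — all of these are monotone and bounded by $2m$, so the algorithm halts. Stopping location: the agent declares STOP only when no $\mathrm{B}$, no unmarked, and no $\mathrm{F}$ passage is adjacent to its current node; by the tree invariant a node other than the origin always retains its $\mathrm{F}$-passage, so STOP can only occur at the origin. Completeness: if some edge $e=(u,v)$ were never traversed outgoing from $u$, then at termination the passage $u\xrightarrow{e} v$ would be unmarked (or $\mathrm{B}$), so the agent could not have stopped at $u$; chasing this along the $\mathrm{F}$-tree back to the origin shows the agent could not have stopped at all, contradiction — hence every passage is traversed outgoing, i.e. every edge is traversed once in each direction.

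The main obstacle I expect is the careful case analysis sustaining the structural invariant, specifically the $\mathrm{B}$-then-$\mathrm{E}$ mechanism: one must argue that whenever the agent is about to backtrack out of $u$ along its $\mathrm{F}$-edge, every cross-edge incident to $u$ that was marked $\mathrm{B}$ at its \emph{other} endpoint has already been revisited and upgraded to $\mathrm{E}$, so that no edge is left traversed in only one direction. Making this precise requires relating the order in which passages at $u$ are consumed to the DFS recursion structure, which is exactly the subtle point \cite{even2011graph} devotes attention to; everything else is routine invariant-pushing.
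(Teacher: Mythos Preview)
Your overall architecture matches the paper's: the $\mathrm{F}$-marked passages form a rooted tree pointing to the origin, STOP can only fire at the origin, and completeness follows by chasing an untraversed passage down the $\mathrm{F}$-tree to a contradiction. Where your plan has a genuine gap is the termination bound. You write that each iteration ``traverses an $\mathrm{F}$ passage exactly once per node'' and fold this into the $2m$ count, but nothing in your invariant list forbids the agent from returning to a node $u$ \emph{after} it has left via its $\mathrm{F}$-passage and then leaving via that same $\mathrm{F}$-passage again (the $\mathrm{F}$ marker is never overwritten). Ruling this out is exactly the content the paper isolates as a separate claim: one assumes a first double-traversal in the same direction, observes it must be along an $\mathrm{F}$-edge (since $\mathrm{B}$ and unmarked passages become $\mathrm{E}$ after one outgoing use and $\mathrm{E}$ is never selected), and then argues along the tree that an earlier double-traversal must already have occurred. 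Your plan gestures at a ``local clearing'' invariant but does not supply this argument. Relatedly, your sentence ``a passage is traversed outgoing exactly when it receives marker $\mathrm{E}$'' is false as an equivalence: traversing an $\mathrm{F}$-passage is an outgoing move that leaves the marker unchanged.

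Your diagnosis of the main obstacle is also off target. You flag as the hard step that, when the agent backtracks out of $u$ via $\mathrm{F}$, every cross-edge with a $\mathrm{B}$ at its \emph{far} endpoint has already been upgraded to $\mathrm{E}$. But $\mathrm{B}$ has top priority: a $\mathrm{B}$ is created only upon arrival at an already-discovered node and is consumed in the very next move, so at any instant there is at most one $\mathrm{B}$ in the whole graph and it sits at the agent's current node. The scenario you worry about is therefore vacuous. The actual subtlety --- which your case analysis needs to confront --- is precisely the no-reuse of $\mathrm{F}$-passages mentioned above.
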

\begin{proof}
The proof of the result relies on Claims \ref{claim:t1} and~\ref{claim:t2}, which are shown below.

\begin{claim}\label{claim:t1}
When Algorithm~\ref{alg:tremaux} terminates, the agent is at the origin and explored all edges.
\end{claim}
\begin{proof}
    The algorithm terminates only when the agent is adjacent to passages marked E. All discovered nodes other than the origin have a passage marked F. Thus the algorithm terminates when the agent is at the origin. 
    Now, assume by contradiction that some edge was never traversed, and consider without loss of generality one such edge that is adjacent to a node that has been discovered. 
    Note that the corresponding node cannot be the origin. Consider the last move when the agent left that node. That move must have been through an edge $e$ with endpoints marked by E and F. Iterating this reasoning, there is a sequence of directed edges with markers E$ \rightarrow $F leading from this node to the origin, where the agent is presently (the last edge of the sequence is possibly marked E$ \rightarrow $B). This is a contradiction because the origin is only adjacent to markers E.
\end{proof}
\begin{claim}\label{claim:t2}
No edge can be traversed twice in the same direction.
\end{claim}
\begin{proof}
Assume by contradiction that some edge $e$ is traversed twice in the same direction, and consider the first occurrence of such event on passage $u \xrightarrow{e} v$. From the description of the algorithm, it is clear that this passage was marked F. Now, observe that the edges with one endpoint marked F form a directed sub-tree of the explored graph rooted at the origin (the tail of the arrow is marked E and the head is marked F). If some edge of this sub-tree is traversed twice in the reverse direction, it must be that this same edge was traversed twice in the other direction, thereby contradicting the fact that we chose the first occurrence of the event of an edge being traversed twice in the same direction.  
\end{proof}
The claims above finish the proof of Proposition \ref{prop:tremeaux}.
\end{proof}
\subsection{Navigation table}\label{sec:nav-table}
We now provide an equivalent and more compact way to describe a navigation algorithm -- such as Trémaux's algorithm -- in a table with four columns, (see e.g. Table~\ref{table:Trémaux}). This format will be useful to give a clear and compact description of our algorithms. The lines are ordered by priority and the agent chooses the first line that fits its current situation.
The first column ($p_u$) denotes the markers available at $u$, the initial position of the agent, and defines the first step \eqref{step1} of a move. 
For instance, we can read from the first column of Table~\ref{table:Trémaux} that Trémaux's algorithm always prefers an adjacent port marked $\emptyset$ to one marked F. 
We can also infer that if the agent is not adjacent to a marker in \{B, $\emptyset$, F\}, the algorithm stops. The second column $p_u'$ tells how the marker should be changed in step \eqref{step2}. 
The third column $p_v$ indicates what the robot reads at $v$, the new position of the agent \eqref{step4}. The forth column $p_v'$ prescribes how the marker of edge $(u,v)$ at $v$ should change \eqref{step5}. The symbol dash `-' signals that all markers are accepted, or unchanged, in the corresponding step.

\begin{table}[h!]
\centering
\begin{tabular}{||c c c c||} 
 \hline
    $p_u$ & $p_u'$ & $p_v$ & $p_v'$ \\ [0.5ex] 
 \hline\hline
 B & E & - & - \\ 
 $\emptyset$ & E & $v$ discovered & B \\
 $\emptyset$ & E & $v$ undiscovered & F \\
 F & - & - & -\\
 \hline
\end{tabular}
\caption{Trémaux's navigation Table.}
\label{table:Trémaux}
\end{table}

\begin{figure}
    \centering
    
\begin{tikzpicture}[
    thick,
    main node/.style={circle, fill=blue!20, draw, font=\sffamily\large\bfseries}
  ]

  \node[main node] (A) at (0,0) {A};
  \node[main node] (B) at (2,0) {B};
  \node[main node] (C) at (1,-2) {C};
  \node[main node] (D) at (4,0) {D};
  \node[main node] (E) at (3,-2) {E};

  \draw (A) -- (B) -- (D);
  \draw (A) -- (C) -- (B);
  \draw (C) -- (E);

  \draw[blue, dashed, ->] (A) edge[bend right=10] node[midway, below, font=\small] {1} (B);
  \draw[blue, dashed, ->] (B) edge[bend right=10] node[midway, below,font=\small] {2} (D);
  \draw[blue, dashed, ->] (D) edge[bend right=10] node[midway, above,font=\small] {3} (B);
  \draw[blue, dashed, ->] (B) edge[bend right=10] node[midway, left,font=\small] {4} (C);
  \draw[blue, dashed, ->] (C) edge[bend right=10] node[midway, right,font=\small] {5} (A);
  \draw[blue, dashed, ->] (A) edge[bend right=10] node[midway, left,font=\small] {6} (C);
  \draw[blue, dashed, ->] (C) edge[bend right=10] node[midway, below,font=\small] {7} (E);
  \draw[blue, dashed, ->] (E) edge[bend right=10] node[midway, above,font=\small] {8} (C);
  \draw[blue,dashed, ->] (C) edge[bend right=10] node[midway, right,font=\small] {9} (B);
  \draw[blue, dashed, ->] (B) edge[bend right=10] node[midway, above,font=\small] {10} (A);
\end{tikzpicture}

\caption{Example of execution of Trémaux's algorithm, starting from node A, for a graph with $m=5$ edges.}
    \label{fig:enter-label}
\end{figure}
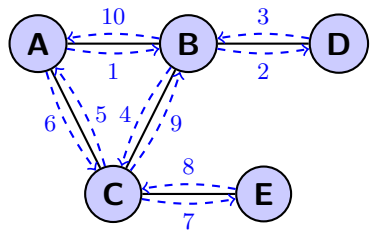

\section{Exploration by two agents}\label{sec: 2agents}

\subsection{Asynchronous exploration}
We now describe an exploration algorithm with two agents in the asynchronous model of Section \ref{sec:navigation-model}. The algorithm is the same for both agents which do not need to be distinguishable. It is presented in Table \ref{table:1} using the formalism of navigation tables introduced in Section \ref{sec:nav-table}. A brief intuition is as follows. The agents both perform independent versions of Trémaux algorithm, backtracking whenever they encounter previously explored vertices. The agents also use a new mark D (Done) to indicate that an edge has been traversed twice. When one of the agents returns to the origin and there is no adjacent unmarked passage, it uses the marker E left by the other agent to follow its trail. 
If one agent is only adjacent to passages marked D, it declares that the graph is explored. In this case, it will always be true that both agents are located on the same node. In practice, the second line of the navigation table, Table~\ref{table:1}, is never used in the asynchronous setting, but will reveal useful for the synchronous setting to account for the fact that the robots may decide to simultaneously traverse an unexplored edge in opposite directions.

\begin{table}[h!]
\centering
\begin{tabular}{||c c c c||} 
 \hline
    $p_u$ & $p_u'$ & $p_v$ & $p_v'$ \\ [0.5ex] 
 \hline\hline
 B & D & - & D \\ 
 $\emptyset$ & E & E & D \\
 $\emptyset$ & E & $v$ discovered & B \\
 $\emptyset$ & E & $v$ undiscovered & F \\
 F & D & - & D \\ 
 E & D & - & D \\ 
 \hline
\end{tabular}
\caption{Exploration with two agents: navigation table.}
\label{table:1}
\end{table}

\begin{theorem}\label{th:asynchronous}
In the asynchronous setting, the two agents using Table \ref{table:1} traverse all edges twice and then stop at the same node after exactly $2m$ rounds, where $m$ is the number of edges of the graph. 
\end{theorem}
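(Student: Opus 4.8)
The plan is to show two things: first, that the process never gets stuck before all edges are traversed twice, and second, that this takes exactly $2m$ rounds with both agents ending on a common node. The round count is the easy part: every line of Table~\ref{table:1} has the agent traverse exactly one edge per move, so after $t$ rounds exactly $t$ edge-traversals have occurred in total. Since we will show each of the $m$ edges is eventually traversed exactly twice (once in each direction, or with the D-mark bookkeeping, twice total) and no edge is ever traversed a third time, termination happens precisely at $t = 2m$. So the heart of the argument is the structural invariant analysis, mirroring the proof of Proposition~\ref{prop:tremeaux}.

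The key step is to identify the right invariant on the marker configuration. I would track, at each point in the execution, the set of passages marked F and argue — as in Claim~\ref{claim:t2} — that the F-marked passages always form a forest of directed subtrees (tail marked E, head marked F), one rooted at the origin for each agent's current ``active branch.'' The subtlety over the single-agent case is the interaction handled by lines 1, 2, 5, and 6: when an agent at $u$ sees a B-passage it converts $u\xrightarrow{e}v$ to D and the far end to D (line 1, analogous to Trémaux's B$\to$E but now ``closing'' the edge permanently); when an agent crosses an unmarked passage and lands on a port already marked E by the \emph{other} agent (line 2), it immediately sets both ends to D; and lines 5 and 6 let an agent retrace an F-passage or follow the other agent's E-trail, marking D behind it. I would prove by induction on rounds that: (i) every discovered non-origin node has exactly one incident F- or E-passage pointing ``back'' toward the origin along a not-yet-D path, so an agent is never stuck while some incident edge is not D; (ii) an edge receives its second traversal exactly when it first gets a D on one end, and thereafter is never re-traversed because every line that traverses an edge requires a non-D marker at $u$.

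For the ``stop at the same node'' claim I would argue that the terminating agent declares STOP only when all its incident passages are D (no line of Table~\ref{table:1} applies otherwise), and then use invariant (i) to show that at that moment no edge anywhere is still ``open'' (has a non-D endpoint reachable), which forces the other agent to also be sitting with all-D incident passages — and a short connectivity/parity argument (each D-edge contributing one traversal to each endpoint-agent's count, total $2m$) pins both agents to the same vertex, namely the unique node where the two E-trails from the origin meet. I expect the main obstacle to be the case analysis for line 2: formalizing ``the agent lands on a port marked E by the other agent'' requires knowing that E-marked half-edges are exactly the ``forward'' edges of the other agent's in-progress DFS tree that have not yet been closed to D, and ruling out pathological configurations (e.g.\ both agents simultaneously trying to claim the same unexplored edge — which is precisely why line 2 exists and why the theorem restricts cleanly to the asynchronous model where only one agent moves per round). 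Getting that invariant stated tightly enough to be preserved by all six lines, for both agents, is where the real work lies.
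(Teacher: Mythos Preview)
Your overall plan is sound, but you have the wrong structural invariant, and without the right one the completeness step does not go through. The paper's key invariant (Claim~\ref{claim:path}) is: \emph{at every round, the edges traversed exactly once --- those with both ports in $\{B,E,F\}$ --- form a simple path whose endpoints are the two agents.} A short case check maintains it (an unmarked passage extends the path by one edge; a B-, F-, or E-passage is necessarily the first edge of the path and closes to D, shrinking it). Co-location is then immediate (all-D at one agent means the path has length zero, so the other agent is right there), and completeness follows by the contradiction in Claim~\ref{claim:finish}.

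Your invariant (i) is the single-agent DFS tree structure, and it is not preserved here: when one agent follows the other's E-trail (line~6), it overwrites both E and the matching F to D, so a discovered non-origin node can lose its back-pointer while still having an unmarked passage (e.g.\ on the graph $\text{origin}$--$a$--$b$ with an extra leaf $c$ at the origin, let agent~1 do $\text{origin}\to a$, then agent~2 do $\text{origin}\to c\to\text{origin}\to a$; now $a$'s only non-D port is the unmarked one to $b$). Even granting (i), it does not yield ``stopping agent sees all D $\Rightarrow$ no open edge anywhere'': it says nothing about where the \emph{other} agent is, and your parity argument for co-location requires completeness \emph{first}, not the other way around. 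Your parenthetical ``each D-edge contributing one traversal to each agent'' is also false (in the example above, agent~2 traverses $(\text{origin},c)$ twice and agent~1 zero times). Minor point: line~2 is never triggered in the asynchronous model (if $p_v=E$, whoever crossed $e$ from $v$ already marked $p_u$, so $p_u\neq\emptyset$), as the paper notes; your case analysis for it is unnecessary for this theorem.
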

We now turn to the analysis of the algorithm in the asynchronous setting. We make the following claims that lead to the proof of Theorem \ref{th:asynchronous}.

\begin{claim}\label{claim:number}
At the start of any round, the edges that have both passages unmarked have never been traversed, the edges which have both passages marked by \{B, E, F\} have been traversed once, the edges which have both passages marked D have been traversed twice. All of the edges in the graph fall in one of these cases.
\end{claim}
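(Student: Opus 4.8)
The plan is to prove the claim by induction on the number of rounds, tracking the state of a single edge through the actions prescribed by Table~\ref{table:1}. At the start (round $0$) every passage is unmarked and no edge has been traversed, so all edges fall in the first case and the claim holds. For the inductive step, I would argue that a round modifies the markers of exactly one edge --- the edge $e=(u,v)$ traversed by the active agent --- and that the change is consistent with the claimed invariant. Concretely, I would enumerate the lines of Table~\ref{table:1} and check, for each, what the pair of markers on $e$ was before the move and what it becomes after: line~3 and line~4 take $e$ from $(\emptyset,\emptyset)$ (untraversed) to $(\text{E},\text{B})$ or $(\text{E},\text{F})$ (traversed once); line~1, line~5 and line~6 take $e$ from a pair in $\{\text{B},\text{E},\text{F}\}$ to $(\text{D},\text{D})$ (traversed twice); and line~2 takes $e$ from a configuration where the outgoing passage is $\emptyset$ and the incoming passage is E --- which by the invariant means $e$ was traversed once --- to $(\text{D},\text{D})$, i.e. traversed twice. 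In every case the number of traversals increases by exactly one and the resulting marker pair lies in the prescribed class, so the invariant is preserved.

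The one subtlety that needs care is the asymmetry of line~2: when the active agent at $u$ sees $p_u=\emptyset$ and, upon arriving at $v$, reads $p_v=\text{E}$, the edge ends with passages $(\text{D} \text{ at } u,\text{D} \text{ at } v)$ even though only one passage was E beforehand. I need to confirm that this does not create an edge with mixed marker classes (e.g. one D and one non-D) at any intermediate point of the round --- but since a "round" in the asynchronous model is the atomic execution of all of S1--S5 by one agent, the only states visible "at the start of a round" are the pre- and post-move states, and I have just checked that both lie in the claimed classes. I would also remark that no line of the table ever reads or writes a passage of an edge other than the one being traversed, so all other edges are untouched, completing the bookkeeping. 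The final sentence of the claim (exhaustiveness) follows because the marker alphabet is $\{\emptyset,\text{E},\text{F},\text{B},\text{D}\}$ and the invariant asserts the two passages of any edge are always in the same class $\{\emptyset\}$, $\{\text{E},\text{F},\text{B}\}$, or $\{\text{D}\}$; I would note that mixed pairs such as $(\emptyset,\text{E})$ do occur \emph{during} a round but never persist to the next round's start.

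The main obstacle I anticipate is not any deep argument but rather the need to verify that the preconditions of each table line are compatible with the inductive hypothesis --- in particular, that whenever line~2 fires, the passage at $v$ genuinely carries an E (so $e$ was traversed exactly once, not zero or twice), which requires knowing that the other agent previously exited $v$ along $e$ leaving an E. This is immediate from the invariant applied to the previous round, but it must be stated to rule out the possibility that line~2 is triggered on an edge in an unexpected state. A secondary point of care is simply the tedium of the case analysis over six table lines; I would present it compactly, perhaps grouping lines~1, 5, 6 together since they all have the form "single passage marked, rewrite both to D".
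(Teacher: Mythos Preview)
Your inductive case analysis is essentially the paper's argument, spelled out more explicitly; the paper's proof is a terse version of the same transitions you enumerate (unmarked $\to$ \{B,E,F\} on first traversal, \{B,E,F\} $\to$ D on second, D never touched again).

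Where you diverge is in your treatment of line~2, and there the reasoning goes off the rails. The paper explicitly notes (just before Table~\ref{table:1}, and again parenthetically in the proof) that line~2 is \emph{never used} in the asynchronous setting. The reason is immediate from your own invariant: if the moving agent reads $p_u=\emptyset$ at the start of its atomic round, then by the inductive hypothesis the other passage $p_v$ is also $\emptyset$, and since no other agent acts between S1 and S4, the agent will still find $p_v=\emptyset$ on arrival. So line~2 cannot fire. Your attempt to cover it instead appeals to the invariant on the configuration $(p_u,p_v)=(\emptyset,\text{E})$ to conclude ``$e$ was traversed once'' --- but that configuration \emph{violates} the invariant (mixed classes), so the inference does not type-check. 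The scenario you sketch (``the other agent previously exited $v$ along $e$ leaving an E'') is impossible asynchronously: had the other agent done so in some earlier atomic round, that same round would have marked $p_u$ with B or F via line~3 or~4, contradicting $p_u=\emptyset$.

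The correct fix is to drop the line~2 case entirely here and observe it is vacuous. Line~2 only becomes live in the synchronous extension of Section~\ref{sec:synchronous}, where both agents enter the same unexplored edge simultaneously from opposite ends; there the edge jumps from untraversed to traversed-twice in one round, and the paper handles it by a separate (short) argument.
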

\begin{proof}
Initially, all passages are unmarked. When an unexplored edge is traversed for the first time, both of its endpoints are marked by one of B, E, F (recall that the second line of the table is not used in this section). Note that a marked passage never becomes unmarked in the course of the exploration. This proves that the edges that have never been traversed are exactly those having two unmarked passages. 

Also observe that whenever a passage marked B, E, F is traversed, both passages of the corresponding edge will be marked D. This proves the claim that the edges which have been traversed once are exactly those for which both passages are marked by B, E, F. 

Finally, observe that a passage marked D will never be used by the agent. This allows to conclude that the edges which have been traversed twice are exactly those with both endpoints marked D.
\end{proof}

\begin{claim}\label{claim:path}
    At the start of any round, the set of edges that have been traversed once form a path\footnote{that path may form a cycle if the two agents are located on the same node.} of disjoint edges between both agent's locations. 
\end{claim}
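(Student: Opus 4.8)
The plan is to prove Claim~\ref{claim:path} by induction on the rounds, using Claim~\ref{claim:number} (which already characterizes traversed-once edges as those with both passages in \{B, E, F\}) as the starting point. I would maintain a slightly strengthened invariant that also records the marker structure along the path: namely, that the traversed-once edges, read from one agent's location to the other, alternate in a predictable way — each such edge has a ``tail'' endpoint marked E and a ``head'' endpoint marked F or B — so that the path is genuinely a simple path (of disjoint edges) with the two agents sitting at its two ends. The base case is immediate: at the start, no edge has been traversed once, and the ``path'' is the empty path with both agents at the common origin.

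For the inductive step, I would fix a round and case on which line of Table~\ref{table:1} the moving agent (say at node $u$, moving to $v$ along edge $e$) uses, tracking how the set of traversed-once edges changes. In the two ``backtrack'' lines (B$\to$D and F$\to$D), edge $e$ goes from traversed-once to traversed-twice, so it is \emph{removed} from the path; since by the induction hypothesis $e$ was an end-edge of the path at $u$ (because $u$ was an agent location), removing it and moving the agent to $v$ keeps the path structure intact, now with one fewer edge and the agent relocated to $v$, which is the new endpoint. In the lines where an unmarked passage is traversed ($\emptyset\to$E with $v$ discovered or undiscovered), edge $e$ becomes traversed-once for the first time and is \emph{appended} to the path at $u$; I must check that $v$ is not already an internal vertex of the path — if $v$ is undiscovered this is clear, and if $v$ is discovered I need to argue, using Claim~\ref{claim:number} and the marker invariant, that $v$'s only incident traversed-once passages are compatible with $v$ becoming the new endpoint (the passage $v\xrightarrow{e}u$ gets marked B, and any other traversed-once passage at $v$ must be the E-tail of $v$'s unique ``first-entry'' edge, which connects toward the other agent). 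This is the delicate point. Finally, the E$\to$D line (following the other agent's trail) and the second line ($\emptyset\to$E with $p_v=$E, yielding D) are handled similarly: in the E$\to$D case $e$ was traversed once (both endpoints in \{B,E,F\}), so again it is removed as an end-edge; the second line only arises in the synchronous setting and can be noted as not occurring here, or handled as a symmetric ``append then immediately the situation is a cycle'' case.

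The main obstacle I anticipate is the bookkeeping in the ``$v$ discovered'' append case: ruling out that appending $e$ at $u$ creates a branch or revisits a vertex already on the path. The key insight to exploit is that each discovered node other than the origin carries exactly one F-marked (first-entry) passage, and that the E-tails form, within each agent's explored region, a tree-like trail rooted toward the origin — so the traversed-once edges never accumulate a vertex of degree~$3$. I would phrase this carefully as part of the strengthened invariant (``at every internal vertex of the path, exactly two incident passages are in \{B,E,F\}, one an E and one an F/B; at each endpoint, exactly one is'') so that the induction carries the needed information through every line of the table. The footnote case — the path forming a cycle when both agents coincide — is covered by allowing the two ``endpoints'' of the path to be the same vertex, which happens precisely when an append or remove operation brings the moving agent onto the other agent's node.
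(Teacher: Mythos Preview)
Your high-level plan---induction on rounds, casing on the lines of Table~\ref{table:1}, classifying each move as an ``append'' or a ``remove'' at the agent's end of the path---is exactly what the paper does, and you are right that the $\emptyset$-to-discovered-$v$ case is where one must be careful. But your proposed resolution of that case is wrong: the strengthened invariant you want to carry (that the traversed-once edges form a \emph{simple} path, with exactly two \{B,E,F\}-passages at each internal vertex and exactly one at each endpoint) is false. Take the graph on $\{A,B,C,D\}$ with edges $A\text{--}B$, $B\text{--}C$, $C\text{--}D$, $A\text{--}C$, both agents starting at $A$. Let the adversary move agent~1 three times ($A\to B\to C\to D$, each via an unmarked passage), then move agent~2 once. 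Agent~2 at $A$ has the unmarked passage to $C$, takes it, finds $C$ already discovered, and marks $C\to A$ by B. The traversed-once edges are now $\{A\text{--}B,\,B\text{--}C,\,C\text{--}D,\,A\text{--}C\}$---a triangle with a pendant---and $C$ (agent~2's location) carries \emph{three} incident \{B,E,F\}-passages. The append landed precisely on an internal vertex of the previous path $A\text{--}B\text{--}C\text{--}D$, the scenario you hoped to exclude. So your degree bookkeeping cannot be carried through the induction.

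The fix is to read ``path of disjoint edges'' as a \emph{trail}: a walk with pairwise distinct edges in which vertices may repeat (in the example above, the trail is $C\text{--}A\text{--}B\text{--}C\text{--}D$). Under that reading, the $\emptyset$-case needs no extra argument at all---appending a fresh edge at one end of a trail always yields a longer trail, regardless of whether the new endpoint already lies on it---which is why the paper dispatches it in one line. Likewise, in the B, F, and E cases the paper simply notes that the chosen passage lies on the trail and that traversing it (and marking both ends D) shortens the trail by one edge at the agent's end. Your per-edge marker observation (each traversed-once edge has an E on one side and an F or B on the other) is correct, but it does not assemble into a simple-path structure, and you should drop the vertex-degree constraints entirely rather than try to rule out degree~$3$.
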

\begin{proof}
We provide a proof of this invariant by induction. 
The invariant holds initially, because all passages are unmarked and both agents are on the same node. We assume that the result holds at some round $t$, 
and we show that the result holds at the next round. We denote by $u_1\xleftrightarrow{e_1} \dots \xleftrightarrow{e_{\ell-1}} u_\ell$ 
the path of length $\ell \in \Nbb$ between both agents. We assume without loss of generality that the agent that is allowed to move at the present round is located at $u_1$.

If $u_1$ is adjacent to a passage marked B, then that passage must appear in the path between both agents. Thus moving the agent along that edge, and marking both of its endpoints by D preserves the invariant.
Else if $u_1$ is adjacent to an unmarked passage, then taking that passage and marking its endpoints by B, E, or F maintains the invariant. 
Else, $u_1$ is adjacent to a passage marked E which appears in the path between both agents. Thus moving the agent along that edge and marking both endpoints by D preserves the invariant.
\end{proof}

\begin{claim}\label{claim:finish}
    If an agent is adjacent only to passages marked D, both agents must be co-located and all edges have been visited twice.
\end{claim}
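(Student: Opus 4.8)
The plan is to combine Claims~\ref{claim:number} and~\ref{claim:path}. Suppose at the start of some round an agent, say the one at $u_1$, is adjacent only to passages marked D. By Claim~\ref{claim:path}, the edges traversed exactly once form a path of disjoint edges $u_1 \xleftrightarrow{e_1} \dots \xleftrightarrow{e_{\ell-1}} u_\ell$ between the two agents' locations, where $u_\ell$ is the other agent's position. If this path had length $\ell \geq 2$, then its first edge $e_1$ would be a once-traversed edge incident to $u_1$; by Claim~\ref{claim:number} its passage at $u_1$ is marked by one of $\{$B, E, F$\}$, not D, contradicting the assumption that $u_1$ is adjacent only to D-marked passages. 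Hence the path must have length $\ell = 1$ with no edges, i.e. $u_1 = u_\ell$, so both agents are co-located.

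It then remains to argue that all edges have been traversed twice. Since $u_1$ is adjacent only to D-marked passages, by Claim~\ref{claim:number} every edge incident to $u_1$ has been traversed twice; in particular there is no unexplored or once-traversed edge incident to $u_1$. I would then argue that the set of twice-traversed edges spans the whole (discovered) graph: an edge is ``discovered'' once one of its endpoints has been visited, and I claim every edge of $G$ incident to a visited node is twice-traversed. Suppose not, and pick a once-traversed or never-traversed edge incident to a visited node; since the once-traversed edges form a single path between the co-located agents (which is now empty), there are no once-traversed edges at all, so this edge has never been traversed. But a never-traversed edge incident to a visited node has both passages unmarked at that node, and one checks from the navigation table that an agent visiting that node would have taken (or would still take) such an unmarked passage before ever marking all incident passages D — contradicting that the agent at $u_1$ sees only D. Chaining this along a path from $u_1$ to any visited node in the explored subgraph shows every edge of the explored component is twice-traversed, and since the graph is connected this accounts for all $m$ edges.

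The main obstacle I expect is the second part: rigorously ruling out an undiscovered or partially-explored region elsewhere in the graph while the agent at $u_1$ sees only D. The clean way to handle this is to strengthen the bookkeeping rather than argue ad hoc — for instance, to maintain as an auxiliary invariant (provable by the same induction as Claim~\ref{claim:path}) that the set of edges with at least one marked passage is connected and contains $u_1$, so that ``D everywhere around $u_1$'' propagates outward. Once that connectivity invariant is in hand, the argument above goes through: the once-traversed path is empty, every edge touching the explored region is twice-traversed, connectivity of $G$ forces the explored region to be all of $G$, and therefore all $m$ edges have been traversed twice, completing the proof.
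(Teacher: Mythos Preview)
Your co-location argument is correct and essentially identical to the paper's. You also correctly extract the key global consequence: since the once-traversed edges form the (now empty) path between the agents, there are \emph{no} once-traversed edges anywhere, so every passage in the entire graph is marked either $\emptyset$ or D.

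Where you go astray is in locating the contradiction for the second part. You write that the presence of an unmarked passage at some visited node $v$ ``contradict[s] that the agent at $u_1$ sees only D.'' But $v$ is not $u_1$, so nothing about $v$'s passages directly contradicts what the agent at $u_1$ sees. This mis-identification is what leads you to the unnecessary ``chaining'' argument and the proposed auxiliary connectivity invariant.

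The paper's contradiction is simpler and global: you have already established that every marker in the graph lies in $\{\emptyset, \mathrm{D}\}$. Now take a visited node $v \neq u_1$ adjacent to an unmarked passage, and consider the last time an agent left $v$ through a passage other than one marked B. Since an unmarked passage is present (and always was, as markers are never erased), and since $\emptyset$ has priority over F and E in the navigation table, that departure must have been through an unmarked passage, leaving an E marker at $v$. That E marker cannot subsequently be overwritten to D by a move originating at $v$ (line~6 requires no $\emptyset$ passage, which $v$ still has). Hence an E marker survives at $v$, contradicting the global fact that all markers are in $\{\emptyset,\mathrm{D}\}$. No propagation from $u_1$ and no extra invariant are needed: the global absence of once-traversed edges, which you already proved, is exactly the lever that makes the local argument at $v$ yield a contradiction.
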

\begin{proof}
    Since the set of nodes that are marked B, E, F form a path between both agents, when one of the agents is adjacent only to passages marked D, it must be that the other is located on the same node and that all passages in the graphs have markers in \{$\emptyset$, D\}.

    We now assume by contradiction that one edge has never been traversed, i.e. that there is an unmarked passage. Without loss of generality, we consider a node that has been visited and that is adjacent to one such passage. This node cannot be the current position of the two agents. Consider the last time that an agent left this node in a move that was not a backtracking move. At that moment, the corresponding agent must have left the node through an unexplored edge, because those are always preferred to previously explored edges. Thus it left a marker E at that node, which forms a contradiction. 
\end{proof}
\begin{proof}[Proof of Theorem \ref{th:asynchronous}]
    By Claim \ref{claim:finish}, if the algorithm terminates, all edges have been visited at least twice and the agents are co-located. By Claim \ref{claim:number}, the algorithm must terminate in at most $2m$ steps, because each edge can be traversed at most twice. 
\end{proof}

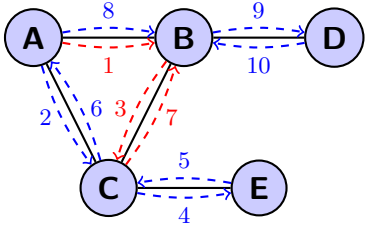
\begin{figure}
    \centering
    
\begin{tikzpicture}[
    thick,
    main node/.style={circle, fill=blue!20, draw, font=\sffamily\large\bfseries}
  ]

  \node[main node] (A) at (0,0) {A};
  \node[main node] (B) at (2,0) {B};
  \node[main node] (C) at (1,-2) {C};
  \node[main node] (D) at (4,0) {D};
  \node[main node] (E) at (3,-2) {E};

  \draw (A) -- (B) -- (D);
  \draw (A) -- (C) -- (B);
  \draw (C) -- (E);

  \draw[red, dashed, ->] (A) edge[bend right=10] node[midway, below, font=\small] {1} (B);
  \draw[blue, dashed, ->] (A) edge[bend right=10] node[midway, left,font=\small] {2} (C);
  \draw[red, dashed, ->] (B) edge[bend right=10] node[midway, left,font=\small] {3} (C); 
  \draw[blue, dashed, ->] (C) edge[bend right=10] node[midway, below,font=\small] {4} (E);
  \draw[blue, dashed, ->] (E) edge[bend right=10] node[midway, above,font=\small] {5} (C);
  \draw[blue, dashed, ->] (C) edge[bend right=10] node[midway, right,font=\small] {6} (A);
  \draw[red, dashed, ->] (C) edge[bend right=10] node[midway, right,font=\small] {7} (B);
  \draw[blue, dashed, ->] (A) edge[bend left=10] node[midway, above,font=\small] {8} (B);
  \draw[blue, dashed, ->] (B) edge[bend left=10] node[midway, above,font=\small] {9} (D);
  \draw[blue, dashed, ->] (D) edge[bend left=10] node[midway, below,font=\small] {10} (B);
\end{tikzpicture}

\caption{Ariadne (blue) and Theseus (red), exploring a graph with $m=5$ (asynchronous setting).}
    \label{fig:AandT}
\end{figure}

\subsection{Synchronous exploration}\label{sec:synchronous}
We now consider the synchronous setting, in which both agents move simultaneously at all rounds. We shall not change the algorithm based on Table \ref{table:1}, except for when both agents are co-located. 
In this case, we will simply assume that both steps \eqref{step1} occur sequentially, instead of simultaneously, in order to avoid the situation in which both agents would always choose the same ports and would thus never split. This is easily enforced by the assumption that the agents may coordinate when they are located at the same node. Such assumption is always granted in collective exploration \cite{fraigniaud2006collective,brass2011multirobot}. 

\begin{theorem}\label{th:explo-synchronous}
    In the synchronous setting, the two agents using Table \ref{table:1} traverse all edges of the graph and meet at some node in exactly $m$ time-steps, where $m$ is the number of edges. 
\end{theorem}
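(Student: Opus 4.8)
The plan is to mirror the asynchronous analysis (the proof of Theorem~\ref{th:asynchronous}, via Claims~\ref{claim:number}, \ref{claim:path}, \ref{claim:finish}), but to process \emph{two} moves per round, and to add one arithmetic invariant that pins the number of rounds down to exactly $m$. Concretely, I would prove by induction on the round index $t$ that the following three statements hold simultaneously at the start of round $t+1$: (a) the marker/traversal-count correspondence of Claim~\ref{claim:number} --- in particular no edge is ever traversed three times; (b) the ``path between the agents'' invariant of Claim~\ref{claim:path} (a closed walk when the two agents coincide); and, the new ingredient, (c) the accounting identity $2t = n_1(t) + 2\,n_2(t)$, where $n_1(t)$ and $n_2(t)$ count the edges traversed exactly once and exactly twice so far. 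Since by (b) the quantity $n_1(t)$ equals the number of edges of the once-traversed path joining the two agents, identity (c) forces this number to be \emph{even}; hence the path has an odd number of vertices and, crucially, is \emph{never a single edge}.

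The inductive step splits on whether the two agents are co-located. If they are not, the path is $u_1 \xleftrightarrow{e_1} \cdots \xleftrightarrow{e_{\ell-1}} u_\ell$ with $\ell \ge 3$ odd, and the two agents sit at its distinct endpoints $u_1,u_\ell$. Each agent then performs one step of Table~\ref{table:1}: exactly as in the proof of Claim~\ref{claim:path}, it either \emph{grows} the path by traversing an unmarked passage at its endpoint, or \emph{shrinks} it by traversing (over a B-, E- or F-marked passage) its unique incident path edge, which is $e_1$ for the agent at $u_1$ and $e_{\ell-1}$ for the agent at $u_\ell$. Because $\ell - 1 \ge 2$ these two candidate edges are distinct, so the two moves never both act on the same once-traversed edge and therefore cannot produce a third traversal; combining the (at most four) case-combinations re-establishes (a)--(c). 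The one way the two agents can operate on a common edge in this regime is by traversing, in opposite directions, the \emph{same unexplored chord} joining $u_1$ and $u_\ell$ --- this is precisely what the second line of Table~\ref{table:1} is for: in a single round the chord passes from untraversed to twice-traversed (both ports marked D), the agents swap endpoints, the once-traversed path is unchanged, and (a)--(c) again hold.

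If the two agents are co-located at a node $z$, then by (a)--(b) there are no once-traversed edges, so $z$ carries only $\emptyset$- and D-markers. Using the coordination granted at a common node (the only place where the algorithm departs from Table~\ref{table:1}), the agents pick two \emph{distinct} unmarked ports whenever at least two are available, growing the path to three vertices and advancing the count by two; the boundary cases (the first round, a node offering a single unmarked port, and the terminal round) are handled directly by the same coordination, the terminal round being the one in which both agents see only D and stop. To finish: the run cannot continue past the point where $n_1 + 2 n_2 = 2m$, i.e.\ past $t = m$; and by the argument of Claim~\ref{claim:finish} applied verbatim, when some agent becomes adjacent only to D-passages the other is co-located with it and every edge has been traversed twice, so $n_2 = m$, $n_1 = 0$, and by (c) this occurs exactly at $t = m$. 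Hence after exactly $m$ time-steps all edges have been traversed and the two agents meet.

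The main obstacle I anticipate is carrying invariant (c) in lockstep with (a)--(b): the parity consequence (the path is never a single edge) is exactly what guarantees that two simultaneous moves cannot collide on an already-traversed edge and thus cause a third traversal, yet (c) itself presupposes that no third traversal occurs --- so the three invariants must be pushed through the induction \emph{together} rather than in sequence. A secondary nuisance is the bookkeeping around interacting moves: one must check that an agent grabbing an unexplored chord to the other agent's position while that agent steps away does not permanently break the path structure (it should self-correct within the next round, since the chord's far endpoint is marked B and is immediately backtracked), and that degenerate low-degree configurations at a co-located node do not spoil the parity count.
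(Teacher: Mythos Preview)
Your plan is the paper's: re-establish Claims~\ref{claim:number}--\ref{claim:finish} in the synchronous setting by treating one round as two moves, with the edge-crossing case handled via line~2 of Table~\ref{table:1}. Your parity invariant~(c) is in fact a refinement of the paper's write-up: in its synchronous proof of Claim~\ref{claim:number} the paper simply asserts that simultaneous opposite traversal of an edge ``only occurs if both passages \dots\ are unmarked'', without justification; your observation that $n_1(t)=2t-2n_2(t)$ is always even --- hence the once-traversed walk is never a single edge --- is precisely what certifies this and rules out a third traversal.

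There is one genuine misstep, in your co-located case. You claim that when the agents share a node $z$, ``by (a)--(b) there are no once-traversed edges, so $z$ carries only $\emptyset$- and D-markers''. This is false: invariant~(b) allows a nontrivial closed walk when the agents coincide. On the $4$-cycle $o\text{--}a\text{--}w\text{--}b\text{--}o$ with both agents starting at $o$, round~1 sends them to $a$ and $b$; in round~2 both explore forward to $w$. The agents are now co-located at $w$, all four edges are once-traversed, and $w$ carries two F-markers rather than only $\emptyset$/D. The repair is easy --- coordination plus the priority order of Table~\ref{table:1} still lets the two agents choose distinct adjacent ports among $\{\text{B},\emptyset,\text{F},\text{E}\}$, and invariants (a)--(c) survive --- but your inductive step must actually cover this case rather than assume it away. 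Relatedly, your ``self-correction'' worry is best dissolved by reading~(b) throughout as ``the once-traversed edges form an edge-disjoint walk between the two agents'' (this is also how Claim~\ref{claim:path} is used in the paper, cf.\ its footnote); under that reading the invariant never breaks, even in the chord-to-a-moving-target configuration you flag.
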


\begin{proof}
    We show that Claim \ref{claim:number} and Claim \ref{claim:path} also hold in the synchronous setting. 

\begin{proof}[Proof of Claim \ref{claim:number} in the synchronous setting] The proof is the same as in the sequential case, except for the particular situation where both agents move simultaneously along the same edge in opposite directions. Note that this situation only occurs if both passages of some edge $u\xleftrightarrow{e}v$ are unmarked. In this case, the edge is traversed twice in a single round. Fortunately, the second line of Table \ref{table:1} allows to catch this situation and both passages of the edge are marked D at the end of this time-step. \end{proof}

\begin{proof}[Proof of Claim \ref{claim:path} in the synchronous setting]
The proof works again by induction and the arguments above still hold except for the case of agents moving simultaneously along the same edge in opposite directions. Observe that such situation may be seen as equivalent to two consecutive moves of a single agent in the asynchronous model. In the first move, the moving agent meets the other agent and mark the passage $u\xrightarrow{e} v$ by $E\xrightarrow{e} B$. In the subsequent move, that agent backtracks on the same edge which gets marked $D\xleftrightarrow{e} D$. Note that the final configuration is the same as for the synchronous setting in which both agents move along $e$ simultaneously in opposite directions, except that the position of the agents in the final configuration is inverted, which is without consequence for the rest of the execution. The invariant is thus also preserved in this situation. 
\end{proof}
Finally, the proof of Claim \ref{claim:finish} is unchanged.
\end{proof}

\subparagraph{Competitive ratio}\label{sec:implicationcr}
We now briefly discuss how the result of Theorem~\ref{th:explo-synchronous} improves the competitive ratio of collective graph exploration. In collective graph exploration, it is clear that $\texttt{OPT}(G,k) \geq m/k$ where $\texttt{OPT}(G,k)$, denotes the minimum number of rounds required by the team to traverse all edges of $G$, if the team has full knowledge of $G$. Our algorithm, which satisfies $\texttt{ALG}(G,k)\leq m$, for any $k\geq 2$, thus improves the competitive guarantee of the single depth first-search (requiring $2m$ time-steps), from $2k$ to $k$, partially answering the question in the conclusion of \cite{brass2014improved}.  

In a more restrictive formulation of collective exploration, the agents are required to return to the origin after all edges have been traversed. In that case, it is generally assumed that the agent have unbounded memory and computation \cite{disser2017general,brass2014improved}. After $m$ synchronous time-steps, the agents meet at some node of the graph and have enough information to compute a shortest path leading them back to the origin.  In this case our guarantee becomes $\texttt{ALG}(G,k) \leq m+D$, where $D$ is the diameter of the graph. For this variant of the problem, we also have $\texttt{OPT}(G,k) \geq m/k$ and $\texttt{OPT}(G,k) \geq 2D$. Thus, $\texttt{OPT}(G,k) \geq \max\{m/k,2D\}\geq \frac{k}{k+1/2}m/k + \frac{1/2}{k+1/2}2D \geq \frac{1}{k+1/2}(m+D)= \frac{1}{k+1/2} \texttt{ALG}(G,k)$. In this formulation of the problem, we improve the competitive ratio of the problem from $2k$ to $k+1/2$, for any $k\geq 2$.

\section{Rendezvous of two agents}\label{sec:rdv}
\subsection{Asynchronous rendezvous}
In this section, we present a rendezvous algorithm for the asynchronous model. The algorithm differs from most of the literature because we allow the two agents to use different algorithms. 

\subparagraph{Algorithm.} The distinct agents are called Ariadne and Theseus, and are defined by their navigation tables, Table~\ref{table:rdv-Ariadne} and Table \ref{table:rdv-theseus}. We assume that they can sense when they are located on a same node, and that they stop if that is the case. A brief intuition on the algorithm is as follows. Both agents run a depth-first search until they land on a node previously discovered by the other agent. Then, they follow the trail left by the other agent. To avoid both agents from cycling behind each other indefinitely, one of the agent (Ariadne) will retrace her path when she realises that the other (Theseus) is following her trail. We state the main result. 
\begin{proposition}
    Ariadne and Theseus, defined respectively by Table \ref{table:rdv-Ariadne} and Table \ref{table:rdv-theseus} meet in any unknown graph $G$ in at most $3m$ steps, where $m$ is the number of edges in $G$. 
\end{proposition}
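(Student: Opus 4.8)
The plan is to transpose the machinery of Section~\ref{sec: 2agents}: pin down a short list of structural invariants maintained over the rounds, prove them by induction with a case analysis over the lines of Ariadne's and Theseus's navigation tables, and then read off from these invariants both that the agents end up co-located and that the total number of moves does not exceed $3m$.

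I would first cut the execution at the round $T^\star$ at which some agent first steps onto a vertex already visited by the other. Before $T^\star$ neither agent is aware of the other, so each is running an unaltered copy of Tr\'emaux's algorithm (Algorithm~\ref{alg:tremaux}); hence, by the reasoning behind Proposition~\ref{prop:tremeaux}, every edge incident to Ariadne's explored subgraph is traversed at most twice, the same holds for Theseus, and the two explored subgraphs are edge-disjoint and share at most the single vertex reached at round $T^\star$. In particular, an analogue of Claim~\ref{claim:path} holds separately for each agent: the edges each has traversed exactly once form the active path of its depth-first search, linking its start vertex to its current position.

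For the part of the execution after $T^\star$ I would seek an analogue of Claim~\ref{claim:path}: at every round there is a simple path $P$ joining the two agents' positions, made up of the edges currently carrying live trail markers, while every other edge has been traversed $0$, $2$, or $3$ times. The dynamics to analyze are that Theseus, once on Ariadne's trail, follows it toward her while overwriting her markers, and that Ariadne keeps running depth-first search but her ``retrace'' rule turns her around the instant she reads one of Theseus's overwriting markers, after which she walks back along her own trail toward him. The content of the induction is that $P$ remains a genuine simple path and that its length strictly decreases once both behaviors are active --- Theseus entering $P$ from one end and Ariadne from the other --- so that after finitely many rounds $|P| = 0$ and the agents coincide and stop. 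One must also handle the symmetric case in which Ariadne is the first to reach the other agent's trail, where Theseus's ordinary depth-first backtracking already brings him onto Ariadne with no retracing required.

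Finally, the bound follows by arguing that every edge of $G$ is traversed at most three times over the whole execution: an edge internal to a single agent's pre-$T^\star$ subgraph is traversed at most twice and is never revisited; an edge Ariadne traversed twice inside a completed subtree may be crossed a third time by Theseus while he works his way along her trail; and an active-path edge of Ariadne's, traversed once by her forward search, is crossed at most once more in total --- by Theseus shadowing her or by Ariadne retracing, but not by both, since after retracing begins the two agents advance along that path from opposite ends and must meet on it. Summing over the $m$ edges gives a cumulative move count of at most $3m$, and the agents have already met. I expect the genuine obstacle to be this last step: establishing rigorously, across the transitions (Theseus joining Ariadne's trail, and Ariadne switching from exploring to retracing), that no edge ever receives a fourth traversal and that $P$ cannot stop shrinking. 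This is precisely where the clean ``each agent is an independent depth-first search'' picture of the first part fails, and where the exact semantics of the extra markers in the two tables --- the symbol Ariadne writes when she leaves a completed subtree, and the symbol Theseus writes while chasing --- must enter in an essential way.
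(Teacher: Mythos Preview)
Your plan has the right shape --- invariants plus a per-edge traversal bound --- but it rests on a misreading of the tables that breaks both halves of the argument.

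First, Ariadne does not ``keep running depth-first search'' once the trails interact. Line~2 of Table~\ref{table:rdv-Ariadne} gives her the high-priority move of following an adjacent port marked E\textsubscript{T}: as soon as she lands on a node Theseus has visited, she chases him along his trail, converting E\textsubscript{T} to E\textsubscript{AT}. Her retrace rule (line~3, via F\textsubscript{AT}) fires only later, when this chase brings her to a node where Theseus has already begun following \emph{her} trail. So the two post-$T^\star$ scenarios are not ``Theseus chases, Ariadne retraces'' versus ``Ariadne arrives first and Theseus backtracks to her''; they are (i) Theseus has entered Ariadne's region, so ports E\textsubscript{A}/E\textsubscript{AT} lead from him to her, and (ii) only Ariadne has entered Theseus's region, so ports E\textsubscript{T} lead from her to him. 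The paper keeps exactly these two \emph{directed} trail invariants (Claims~\ref{claim:pathTtoA} and~\ref{claim:pathAtoT}) and reads correctness off them: if an agent has no table line to apply, the relevant trail is empty, hence they coincide.

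Second, your traversal accounting is wrong in both directions. Edges Ariadne has closed (both ports marked D) are invisible to both tables and are never touched again, so your ``completed-subtree edge crossed a third time by Theseus'' case cannot occur. Conversely, your claim that an active-path edge of Ariadne is crossed ``by Theseus shadowing her or by Ariadne retracing, but not by both'' fails in the asynchronous model: the adversary can let Theseus convert a stretch E\textsubscript{A}$\to$E\textsubscript{AT} (second traversal) and \emph{then} let Ariadne backtrack down to those same edges, where she now sees F\textsubscript{AT} and retraces them to D$'$ (third traversal). The paper sidesteps all of this: it simply observes from the tables that the marker on a port encodes how many times its edge has been crossed (once for \{B\textsubscript{A},E\textsubscript{A},F\textsubscript{A},B\textsubscript{T},E\textsubscript{T},F\textsubscript{T}\}, twice for \{E\textsubscript{AT},F\textsubscript{AT},D\}, three times for \{E\textsubscript{ATT},F\textsubscript{ATT},D$'$\}), and no transition produces a fourth traversal. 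That immediately gives termination in at most $3m$ steps, with no case analysis on the dynamics.
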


\begin{table}[h!]
\centering
\begin{tabular}{||c c c c||} 
 \hline
    $p_u$ & $p_u'$ & $p_v$ & $p_v'$ \\ [0.5ex] 
 \hline\hline
 
 B\textsubscript{T} & D & - & D \\
 E\textsubscript{A} & E\textsubscript{AT} & - & F\textsubscript{AT} \\
 E\textsubscript{AT} & E\textsubscript{ATT} & - & F\textsubscript{ATT} \\ 
 $\emptyset$ & E\textsubscript{T} & $v$ discovered by T & B\textsubscript{T} \\
 $\emptyset$ & E\textsubscript{T} & $v$ undiscovered by T & F\textsubscript{T} \\
 F\textsubscript{T} & D & - & D \\ 
 \hline
\end{tabular}
\caption{Theseus.}
\label{table:rdv-theseus}
\end{table}

\begin{table}[h!]
\centering
\begin{tabular}{||c c c c||} 
 \hline
    $p_u$ & $p_u'$ & $p_v$ & $p_v'$ \\ [0.5ex] 
 \hline\hline
 
 B\textsubscript{A} & D & - & D \\
 E\textsubscript{T} & E\textsubscript{AT} & - & F\textsubscript{AT} \\
 F\textsubscript{AT} & D' & - & D' \\
 $\emptyset$ & E\textsubscript{A} & $v$ discovered by A & B\textsubscript{A} \\
 $\emptyset$ & E\textsubscript{A} & $v$ undiscovered by A & F\textsubscript{A} \\
 F\textsubscript{A} & D & - & D \\ 
 \hline
\end{tabular}
\caption{Ariadne.}
\label{table:rdv-Ariadne}
\end{table}

\subparagraph{Termination.} The algorithm terminates if the two agents are on the same node, or alternatively if one agent does not have a possible move. We start with the following claim, which can be directly verified from the navigation tables. 
\begin{claim}
The edges with an endpoint marked by 
\{B\textsubscript{A}, 
E\textsubscript{A}, 
F\textsubscript{A}, 
B\textsubscript{T}, E\textsubscript{T}, F\textsubscript{T}\}
have been traversed once. 
The edges with an endpoint marked by
\{E\textsubscript{AT}, F\textsubscript{AT}, 
D\} have been traversed twice. 
The edges with an endpoint marked by
\{E\textsubscript{ATT}, F\textsubscript{ATT}, D'\} have been traversed three times.
\end{claim}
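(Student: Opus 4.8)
My plan is to prove, by induction on the number of elapsed rounds, a statement slightly stronger than the claim, of which the claim is then the immediate ``only if'' direction. I assign a \emph{level} to each marker: level $0$ to $\emptyset$; level $1$ to each of \{B\textsubscript{A}, E\textsubscript{A}, F\textsubscript{A}, B\textsubscript{T}, E\textsubscript{T}, F\textsubscript{T}\}; level $2$ to each of \{E\textsubscript{AT}, F\textsubscript{AT}, D\}; and level $3$ to each of \{E\textsubscript{ATT}, F\textsubscript{ATT}, D$'$\}. The invariant I maintain is: at the start of every round, for every edge $e$, the two passages of $e$ carry markers of the same level, and this common level equals the number of times $e$ has been traversed so far. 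Restricting to levels $1$, $2$, $3$ recovers exactly the three assertions of the claim.

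The base case is immediate, since before the first round every passage is $\emptyset$ (level $0$) and no edge has been traversed. For the inductive step, recall that in the asynchronous model a single agent moves per round: it picks a passage $u \xrightarrow{e} v$ according to its table, rewrites it, traverses $e$, and rewrites the passage $v \xrightarrow{e} u$. Every edge other than $e$ is untouched, so its invariant persists. For $e$ itself, the inductive hypothesis says that the marker of $u\xrightarrow{e}v$ before the move has level $\ell$ equal to the current traversal count of $e$; the move increments that count by one; and it remains to inspect the selected row of Table~\ref{table:rdv-theseus} or Table~\ref{table:rdv-Ariadne} and check that the new markers $p_u'$ and $p_v'$ are both of level $\ell+1$. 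I then simply walk through the twelve rows: the four rows with $p_u=\emptyset$ output one of E\textsubscript{A}/B\textsubscript{A}, E\textsubscript{A}/F\textsubscript{A}, E\textsubscript{T}/B\textsubscript{T}, E\textsubscript{T}/F\textsubscript{T}, all level $1$ (count $0 \to 1$); the rows reading B\textsubscript{T}, F\textsubscript{T}, B\textsubscript{A}, F\textsubscript{A} (level $1$) all output D/D (level $2$, count $1\to 2$); Theseus reading E\textsubscript{A} and Ariadne reading E\textsubscript{T} (level $1$) output E\textsubscript{AT}/F\textsubscript{AT} (level $2$); Theseus reading E\textsubscript{AT} (level $2$) outputs E\textsubscript{ATT}/F\textsubscript{ATT} (level $3$); and Ariadne reading F\textsubscript{AT} (level $2$) outputs D$'$/D$'$ (level $3$). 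In every case the output is exactly one level above the input, so the invariant is restored and the induction goes through.

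I do not anticipate a genuine difficulty here — the proof is a bounded case check — so the only thing to get right is the bookkeeping: verifying the twelve-row enumeration is complete and that no marker has been assigned the wrong level. As part of this it is worth noting explicitly that the second and third traversals of any edge are produced only in the two ways listed above (a D/D pair or an AT-type pair for level $2$, and a D$'$/D$'$ pair or an ATT-type pair for level $3$), which is precisely what forces the two passages of $e$ to stay at equal level; and that none of D, D$'$, E\textsubscript{ATT}, F\textsubscript{ATT} appears in the $p_u$ column of either table, so a level-$3$ edge is never traversed again — not needed for the claim itself, but a useful check that the invariant is never subsequently broken.
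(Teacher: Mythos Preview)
Your proposal is correct and is essentially the direct table-verification the paper alludes to (the paper states only that the claim ``can be directly verified from the navigation tables'' and gives no further argument). Your introduction of marker \emph{levels} and the strengthened invariant that both passages of an edge always sit at the same level is exactly the right bookkeeping device to make the twelve-row induction go through cleanly.
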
 
It is clear from the preceding claim that no edge gets traversed more than $3$ times. Thus the algorithm terminates in at most $3m$ steps. 

\subparagraph{Correctness.} We now show that algorithm is correct, i.e. that if the moving agent does not have an adjacent port matching the first column of his navigation table, then it is located on the same node as the other agent. 

While Ariadne and Theseus do not walk on a node discovered by the other, they both perform a Trémaux depth-first search, using letter D to indicate that some edge has been used twice and will be discarded. 
Recall from the the analysis of the previous section that in this phase, the edges with a tail E\textsubscript{A} form a directed path from the origin of Ariadne leading to the current position of Ariadne and the edges with a tail E\textsubscript{T} form a directed path from the origin of Theseus to the current position of Theseus. 

The first round when Theseus is located on a node initially discovered by Ariadne, this node must be adjacent to a port marked E\textsubscript{A} because its exploration was not finished by Ariadne before the move of Theseus, and she has not been back since (otherwise the rendezvous would be complete). From that moment, the behaviour of Theseus is that it will follow the markers E\textsubscript{A} or E\textsubscript{AT} which lead to Ariadne. 

The first round when Ariadne is located on a node initially discovered by Theseus, this node must be adjacent to a port marked E\textsubscript{T}. Ariadne will use this marker to follow the trail of Theseus. When this leads her to a node that she had initially discovered, the marker E\textsubscript{T} is replaced by marker E\textsubscript{AT} and she understands that Theseus is also following her. She thus retrace her steps using markers F\textsubscript{AT} and F\textsubscript{A} to get to Theseus.

We now formalize the arguments above by the following claims.

\begin{claim}
At all rounds, the edges with tail in \{E\textsubscript{A}, E\textsubscript{AT}, E\textsubscript{ATT}\} form a directed path from the origin of Ariadne to the position of Ariadne.
\end{claim}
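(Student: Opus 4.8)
The plan is to prove the invariant by induction on the rounds, exactly mirroring the structure of Claim~\ref{claim:path} from the exploration section but now tracking only Ariadne's trail of $E$-marked passages (in whichever of the three shades $E_A$, $E_{AT}$, $E_{ATT}$ they currently carry). The base case is immediate: before any move, no passage is marked, so the empty path from Ariadne's origin to herself is the required directed path. For the inductive step I would assume that at the start of round $t$ the passages with tail in $\{E_A, E_{AT}, E_{ATT}\}$ form a directed simple path $o_A = v_0 \xrightarrow{e_1} v_1 \xrightarrow{e_2} \cdots \xrightarrow{e_\ell} v_\ell$ with $v_\ell$ Ariadne's current location, and then check that each possible move — by Ariadne or by Theseus — preserves this.

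The key case analysis has two halves. \emph{When Ariadne moves}, I consult Table~\ref{table:rdv-Ariadne}: if she takes a $B_A$ passage she is backtracking along the last edge of her path and both endpoints become $D$, shortening the path by one, so $v_{\ell-1}$ becomes the new endpoint — invariant preserved; if she takes an unmarked $\emptyset$ passage she extends the path by one edge with tail $E_A$ — preserved; if she takes an $E_T$ passage it becomes $E_{AT}$ with head $F_{AT}$, again extending the $E$-path by one edge at its live end — preserved; if she takes an $F_{AT}$ passage (the retracing move) it becomes $D'$ and — here is the subtle point — the corresponding reverse passage is one of the $E_{AT}$/$E_{ATT}$ passages on her path, so this move retracts the path from the end. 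I would need to argue that the $F_{AT}$ passage she walks out of is exactly the reverse of the current last edge of her $E$-path, which follows because the $F_{AT}$ marks were laid down on the heads of the very $E_{AT}$ edges she just created while following Theseus, in last-in-first-out order. \emph{When Theseus moves}, I must check he never disturbs a passage with tail in $\{E_A, E_{AT}, E_{ATT}\}$: scanning Table~\ref{table:rdv-theseus}, Theseus only ever rewrites passages he reads as $B_T$, $E_A$, $E_{AT}$, $\emptyset$, $F_T$ — and crucially when he reads $E_A$ or $E_{AT}$ at $u$ he is reading the \emph{head}-side marker of one of Ariadne's edges (tail $E_A$ at the other end), so promoting it to $E_{AT}$/$E_{ATT}$ changes the shade but not the tail/head structure of Ariadne's directed path. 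So the path is untouched as a directed object; only its edge-labels get upgraded.

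The main obstacle I anticipate is bookkeeping the interaction between Theseus following Ariadne's trail and Ariadne retracing it: I need the right \emph{coupled} invariant so that at the moment Ariadne starts retracing (the $F_{AT}$ rule fires), the $F_{AT}$-passages she will consume are precisely the reverse-ends of her current $E_{AT}$/$E_{ATT}$ path, in the correct order. The clean way to handle this is to strengthen the statement slightly — carrying, as part of the induction hypothesis, that \emph{every} edge on Ariadne's directed path has its head marked by the matching $F$-shade ($F_{AT}$ paired with $E_{AT}$, $F_{ATT}$ with $E_{ATT}$, and the tail of the $F$-subtree sitting at the head of the $E$-edge) — so that retracing is forced to undo the path from its live end. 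With that strengthening in hand, every line of both navigation tables maps to one of the four bullet behaviours above (extend at the live end, retract at the live end, relabel in place, or no-op), and the invariant goes through; the claim about $E$-tailed passages forming the Ariadne-path is then just the projection of this stronger invariant.
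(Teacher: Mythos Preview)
Your approach is essentially the paper's: induct on moves, observe that each of Ariadne's moves either extends her $E$-trail at the live end (the $\emptyset$ and $E_T$ rows) or retracts it (the $B_A$, $F_{AT}$, and $F_A$ rows), while Theseus's moves can only upgrade markers within the set $\{E_A, E_{AT}, E_{ATT}\}$ without altering the path. Two small corrections: you omit the $F_A$ row of Ariadne's table, which is the ordinary DFS backtrack and is handled exactly like $B_A$; and when Theseus reads $E_A$ or $E_{AT}$ at $u$, that passage is the \emph{tail} of Ariadne's directed edge, not the head --- his upgrade $E_A \to E_{AT}$ (resp.\ $E_{AT} \to E_{ATT}$) keeps the tail-marker in the set directly, which is precisely why the invariant survives his moves. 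The paper's proof is three sentences and does not carry your strengthened head-marker invariant; that strengthening is not required for the claim as stated, though it does make the $F_{AT}$-retrace step cleaner.
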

\begin{proof}
    The statement is verified by induction on the moves of Ariadne similarly to the proof of Claim \ref{claim:path}. Observe that Ariadne always leaves a marker E\textsubscript{A} or E\textsubscript{AT} behind her, except when she backtracks and closes some edge. Also, Theseus can only convert makers E\textsubscript{A} and E\textsubscript{AT} in markers E\textsubscript{AT} and E\textsubscript{ATT}. This suffices to prove the claim.
\end{proof}

\begin{claim}\label{claim:pathTtoA} If Theseus has found a node earlier discovered by Ariadne,  the edges with tail in \{E\textsubscript{A}, E\textsubscript{AT}\} lead from Theseus to Ariadne.
\end{claim}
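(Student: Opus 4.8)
The plan is to proceed by induction on the moves of the agents, tracking the structure of the $\{E_A, E_{AT}\}$-tailed edges after the first moment $t_0$ at which Theseus lands on a node that Ariadne had previously discovered. The key observation, already established in the preceding claim, is that at all rounds the edges with tail in $\{E_A, E_{AT}, E_{ATT}\}$ form a directed path $P$ from Ariadne's origin to Ariadne's current position. I would first argue that, once Theseus has reached a node discovered by Ariadne, Theseus never again leaves a marker of the form $E_T$, $F_T$, or $B_T$: inspecting Table~\ref{table:rdv-theseus}, whenever Theseus is adjacent to a port marked $E_A$ or $E_{AT}$, the first applicable lines (the $B_T$ line aside, which is a backtrack within his own trail and can only be triggered finitely often before he must follow an $E_A$/$E_{AT}$ marker) force him to follow that marker, converting $E_A \to E_{AT}$ and $E_{AT} \to E_{ATT}$. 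So from $t_0$ on, Theseus walks along the path $P$ described by Claim~\ref{claim:pathTtoA}'s predecessor, in the direction of Ariadne.

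The heart of the argument is then to show that Theseus sits \emph{on} this path $P$, between Ariadne's origin and Ariadne, and always moves toward Ariadne's end of it. I would set up the invariant: after round $t_0$, the directed edges with tail in $\{E_A, E_{AT}\}$ form exactly the sub-path of $P$ running from Theseus's current position to Ariadne's current position, while the edges with tail in $\{E_{ATT}\}$ (together with $E_{AT}$ heads that Ariadne has since retraced) account for the portion of $P$ between Ariadne's origin and Theseus. The inductive step splits by whose move it is. If Theseus moves, he is adjacent to an $E_A$ or $E_{AT}$ port; by the path structure this port is the next edge of $P$ toward Ariadne, he upgrades its marker ($E_A \to E_{AT}$, $E_{AT} \to E_{ATT}$), and the $\{E_A, E_{AT}\}$-subpath shrinks by one edge from the Theseus end — invariant preserved. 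If Ariadne moves, either she extends $P$ (she leaves a fresh $E_A$ behind, lengthening the $\{E_A, E_{AT}\}$-subpath at the Ariadne end), or she backtracks and closes an edge (marking $D$ or $D'$), which shortens $P$ from the Ariadne end; in the latter case I must check she never retreats past Theseus before they meet, which is exactly where the collision detection ("they stop if co-located") closes the gap — if her backtracking reaches the edge Theseus is on, they are co-located and the algorithm halts.

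I expect the main obstacle to be the bookkeeping around Ariadne's retracing phase and the interaction of the markers $F_{AT}$, $D'$, $E_{ATT}$: one must verify that when Ariadne is retracing (triggered by her seeing an $E_T$ port become $E_{AT}$), the directed path from Theseus to Ariadne along $\{E_A, E_{AT}\}$ markers is not corrupted by Ariadne's own marker changes, and that Theseus chasing along this shrinking path and Ariadne retracing along it are genuinely converging rather than merely circulating. The cleanest way to handle this is to measure progress by the length of the $\{E_A, E_{AT}\}$-tailed path: every move of either agent (other than a finite prelude of $B_T$/$B_A$ backtracks, which by Claim~\ref{claim:number}-style counting are bounded) strictly decreases this length or triggers the co-location stopping condition, and since the length is a nonnegative integer bounded by $m$, rendezvous must occur — which, combined with the $3m$ termination bound from the earlier claim, completes the correctness argument. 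I would close by noting that the symmetric roles are genuinely broken: only Ariadne has the $F_{AT} \to D'$ retracing line, so there is no danger of both agents simultaneously deciding to retrace.
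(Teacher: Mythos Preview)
Your inductive invariant is not correct as stated, and the error is already visible in the base case. You claim that after $t_0$ the $\{E_A, E_{AT}\}$-tailed edges form \emph{exactly} the sub-path of $P$ from Theseus to Ariadne, with the $E_{ATT}$ edges accounting for the origin-to-Theseus portion. But at $t_0$ there are \emph{no} $E_{ATT}$ markers at all: only Theseus creates them (line~3 of his table), and at $t_0$ he has not yet traversed any edge on Ariadne's path. So your invariant would force Theseus to have entered $P$ at Ariadne's origin, which is false in general. The inductive step for a Theseus move is also wrong: when Theseus traverses an $E_A$ edge he converts it to $E_{AT}$, which remains in $\{E_A, E_{AT}\}$, so the set does not ``shrink by one edge from the Theseus end.'' The $E_A$ edges between Ariadne's origin and Theseus's entry point, together with the $E_{AT}$ edges Theseus leaves behind him, stay in $\{E_A, E_{AT}\}$ indefinitely.

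The paper's argument is much shorter and proves only the weak statement actually needed downstream (that Theseus always has an adjacent $\{E_A, E_{AT}\}$ port leading toward Ariadne). The key observation you are missing is exactly the one above: at $t_0$ there are no $E_{ATT}$ markers, so the entire path $P$ from Ariadne's origin to Ariadne is $\{E_A, E_{AT}\}$-tailed; since Theseus's landing node $v$ carries an outgoing $E_A$ port (Ariadne's exploration of $v$ was not finished), $v$ lies on $P$, and the suffix of $P$ from $v$ to Ariadne is the desired $\{E_A, E_{AT}\}$ path. Preservation under subsequent Theseus moves is then one line: each such move takes the next $\{E_A, E_{AT}\}$ edge along $P$. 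Your discussion of Ariadne's retracing phase, the $D'$ markers, and the length-decrement convergence argument belongs to the later correctness and termination claims, not to this structural claim, and is not needed here.
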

\begin{proof}
We consider the first round when Theseus lands on some node $v$ initially discovered by Ariadne. 
Notice that the exploration of this node was not finished by Ariadne at this round, thus it must be adjacent to some marker E\textsubscript{A}. Thus $v$ belongs to the directed path of edges with tail in \{E\textsubscript{A}, E\textsubscript{AT}, E\textsubscript{ATT}\} which goes from the origin to Ariadne. Since Theseus has not yet traversed an edge traversed earlier by Ariadne, there can be no markers E\textsubscript{ATT} in the graph. At this instant, it is therefore the case that edges with tail in  \{E\textsubscript{A}, E\textsubscript{AT}\} lead from Theseus to Ariadne, and it is clear that this property is preserved by all subsequent moves of Theseus (which will be through edges marked E\textsubscript{A} or E\textsubscript{AT}).
\end{proof}

\begin{claim}\label{claim:pathAtoT}
If Theseus has not found a node earlier discovered by Ariadne, but Ariadne has found a node earlier discovered by Theseus, the edges with tail in E\textsubscript{T} lead from Ariadne to Theseus.
\end{claim}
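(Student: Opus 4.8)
The plan is to reuse the template of the proof of Claim~\ref{claim:pathTtoA} with the roles of the two agents exchanged, while keeping in mind that Ariadne and Theseus run the two distinct tables, Table~\ref{table:rdv-Ariadne} and Table~\ref{table:rdv-theseus}. Throughout, one works inside the regime fixed by the hypothesis: Theseus has never landed on a node previously visited by Ariadne, hence he has never read an Ariadne marker and is still performing an uninterrupted Trémaux search. Consequently, by the argument behind Claim~\ref{claim:path} applied to Theseus, at the start of every round the ports marked E\textsubscript{T} describe his active search trail, in the sense that following E\textsubscript{T} ports from any node that carries one eventually reaches Theseus, and every node that Theseus has discovered but not finished, other than his current location, carries an outgoing E\textsubscript{T} port. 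I would establish the claim as an invariant, by induction on the rounds of this regime.

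\emph{Base case.} Let $v$ be the node Ariadne lands on at the first round in which she reaches a node earlier discovered by Theseus; if $v$ is Theseus's current position the two agents are co-located and stop, so assume otherwise. Ariadne reached $v$ from a node $w$ that she had herself discovered, and $w$ --- not having been visited by Theseus, since this is the first Theseus-node Ariadne reaches --- carries no E\textsubscript{T} marker; hence her move out of $w$ was governed by an $\emptyset$-row of Table~\ref{table:rdv-Ariadne}, and the traversed port was unmarked on both sides. Therefore Theseus never traversed the edge $(w,v)$, so $v$ was \emph{not} finished by Theseus; being discovered but not finished and distinct from Theseus's position, $v$ carries an outgoing E\textsubscript{T} port leading to Theseus. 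On her next move Ariadne reads that port; since the E\textsubscript{T}-row of Table~\ref{table:rdv-Ariadne} has priority over the $\emptyset$- and F\textsubscript{AT}-rows, she follows it, rewriting it as E\textsubscript{AT} and marking F\textsubscript{AT} at the far endpoint. From Ariadne's new position, following E\textsubscript{T} ports still reaches Theseus, so the invariant holds.

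\emph{Inductive step.} Assume the invariant holds and consider the next move. If Ariadne moves, she sits at a node discovered but not finished by Theseus and distinct from his position, so she again finds --- by priority, the E\textsubscript{T}-row --- an outgoing E\textsubscript{T} port and follows it exactly as above; if this carries her onto Theseus's node the rendezvous is complete, and otherwise the E\textsubscript{T}-path from her new position to Theseus survives. If Theseus moves, the regime hypothesis ensures he reads no Ariadne marker, so he never enters the prefix of the trail that Ariadne has already rewritten with E\textsubscript{AT}/F\textsubscript{AT}; consequently his move is either a descent that lengthens the trail below Ariadne, a transient back-edge move (which does not disturb E\textsubscript{T}-reachability of Theseus from Ariadne), or a backtrack, which turns the deepest E\textsubscript{T} edge of the current trail into D while moving Theseus onto its other endpoint --- and in this last case that endpoint is either Ariadne's position, forcing the rendezvous, or still strictly below Ariadne on the trail, so the E\textsubscript{T}-path from Ariadne to Theseus is preserved. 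This closes the induction.

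The step I expect to be the main obstacle, and the only place where the asymmetry between the two tables genuinely matters, is the analysis of Theseus's moves in the inductive step: one must rule out that Theseus destroys or reroutes the E\textsubscript{T}-segment joining Ariadne to him. This rests on the observation that, as long as we remain in this regime, Theseus never sees an Ariadne marker, which confines him to the part of his search tree strictly below Ariadne on the trail (or forces an immediate rendezvous), so that the edge erased on a backtrack --- always Theseus's current first-entry edge --- is irrelevant for the Ariadne-to-Theseus path. A secondary point to pin down is the base-case assertion that the node on which Ariadne first picks up the trail really does carry an outgoing E\textsubscript{T} port; this is the `not finished by Theseus' argument above and mirrors the corresponding step in the proof of Claim~\ref{claim:pathTtoA}.
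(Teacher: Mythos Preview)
Your proof follows the same invariant-based template as the paper, only far more explicitly; the paper simply notes that while the itineraries are disjoint the E\textsubscript{T} edges form the directed path from Theseus's origin to Theseus, observes that the first time Ariadne lands on a Theseus-discovered node she lies on this path, and then asserts without further case analysis that the invariant survives all later moves of either agent. Your careful base case (arguing that the arrival port at $v$ is unmarked on both sides, hence $v$ is not finished by Theseus) is correct and fills in what the paper leaves implicit.

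One step of your inductive analysis is not quite right as written. The sentence ``the regime hypothesis ensures he reads no Ariadne marker, so he never enters the prefix of the trail that Ariadne has already rewritten'' is false: Theseus can take a back-edge from below Ariadne to a node $v_\ell$ lying in the rewritten prefix, and since $v_\ell$ was discovered by Theseus this does not violate the regime hypothesis, yet at $v_\ell$ he will see E\textsubscript{AT} and F\textsubscript{AT}. This does no damage to the invariant --- the B\textsubscript{T} port he just created has strictly higher priority than E\textsubscript{AT} in Table~\ref{table:rdv-theseus}, so his next move bounces him back and converts the back-edge to D --- but it means the implication ``no Ariadne marker $\Rightarrow$ never enters the prefix'' is not the correct justification, and the B\textsubscript{T}-return move is absent from your three-way enumeration of Theseus's possible moves. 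Once that case is added, the argument is complete.
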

\begin{proof}
While both agents have disjoint itineraries, the set of edges with tails E\textsubscript{T} form a directed path from the origin of Theseus to the position of Theseus. The first round when Ariadne is located on a node discovered earlier by Theseus, the edges with tail marked E\textsubscript{T} thus lead from Ariadne to Theseus. This invariant is preserved for any move of Theseus or Ariadne, for as long as Theseus does not find a node discovered by Ariadne. 
\end{proof}

\begin{claim}
At least one of Theseus and Ariadne will find a node earlier discovered by the other.
\end{claim}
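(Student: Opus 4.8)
The plan is to argue by contradiction: suppose neither agent ever lands on a node discovered by the other. I want to show this forces the two depth-first searches, run by Ariadne and Theseus on the markers E\textsubscript{A}/F\textsubscript{A}/B\textsubscript{A} and E\textsubscript{T}/F\textsubscript{T}/B\textsubscript{T} respectively, to explore disjoint edge sets whose union is all of $G$ — and that this is impossible because $G$ is connected and the two agents start at different nodes.

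\textbf{Setup and first observations.} Under the contradiction hypothesis, each agent only ever traverses edges not previously traversed by the other (if one traversed an edge the other had used, that endpoint would already be marked, and — more to the point — it would sit on a node the other discovered). So each agent in isolation is running exactly Trémaux's algorithm on its ``own'' subgraph, namely the set of edges it personally traverses. By Proposition \ref{prop:tremeaux} applied to each agent (more precisely, to the analysis of Claims \ref{claim:t1} and \ref{claim:t2}), if an agent's run terminates it must terminate at its own origin having traversed every edge incident to every node it discovered, twice in total (or rather: with all its incident passages eventually marked, and the D-logic would mark them D). In particular, a terminated agent has ``closed'' a connected component: every node it reached has all incident edges traversed by it, so the set of nodes it discovered is closed under adjacency, hence (by connectedness of $G$) is all of $V$. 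But then the other agent's origin was discovered by the first agent — contradiction, provided the other agent actually moved at least once, which it does since the adversary is assumed to let both agents move (or: if one agent never moves, the moving agent discovers every node including the idle agent's start node).

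\textbf{Ruling out non-termination.} The remaining case is that neither agent ever declares STOP and the process runs forever. But by the Termination discussion preceding this claim, no edge is traversed more than three times, so the total number of moves is at most $3m$, a finite bound; the process must terminate after finitely many rounds, either by rendezvous or by some agent having no legal move, i.e.\ by that agent terminating its Trémaux search. Combined with the previous paragraph, termination of an agent's search forces the contradiction. Hence the contradiction hypothesis fails and at least one agent lands on a node discovered by the other.

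\textbf{Main obstacle.} The subtle point — and the one I'd write most carefully — is the claim that ``a terminated agent has discovered all of $V$.'' This needs that while the contradiction hypothesis holds, the moving agent's behaviour genuinely coincides with plain Trémaux on its private edge set: I must check that the extra first and second lines of Tables \ref{table:rdv-theseus}/\ref{table:rdv-Ariadne} (the B\textsubscript{T}$\to$D, E\textsubscript{A}$\to$E\textsubscript{AT}, $\ldots$ rules) are never triggered in this regime, so that only the ``pure DFS'' lines of each table are ever used, and that Claim \ref{claim:t1}'s argument — every discovered non-origin node keeps an F-passage, so termination happens at the origin with all incident edges explored — goes through verbatim. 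Once that is established, connectedness of $G$ does the rest: the discovered set is nonempty, closed under taking neighbours, hence equals $V$, so in particular it contains the other agent's origin, contradicting that the other agent is always on undiscovered-by-the-first nodes.
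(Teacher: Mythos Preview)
Your proof is correct and follows essentially the same approach as the paper: under the contradiction hypothesis both agents run plain Tr\'emaux DFS, the total number of moves is bounded (so some agent's DFS terminates), and a terminated DFS on a connected graph reaches every node, in particular the other agent's origin. The paper's own proof is a three-line sketch of precisely this idea (``both run DFS, the graph is connected, the adversary moves an agent each round, so the algorithm cannot stop before one finds a node discovered by the other''); you have simply filled in the details that the paper leaves implicit, including the check that the non-DFS lines of the navigation tables are never triggered.
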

\begin{proof}
While they do not find a node discovered by the other, it is clear from the tables that both Ariadne and Theseus each run an instance of depth-first search. Since the graph is connected, and the adversary must move one of the agents at each round, the algorithm cannot stop before one of the agent has found a node discovered by the other. 
\end{proof}

\begin{claim}\label{claim:finalrdv}
If one of Ariadne or Theseus does not have a move prescribed by their navigation table, they are located on the same node.
\end{claim}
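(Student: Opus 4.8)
The plan is to do a careful case analysis on which agent is stuck, and use the preceding claims to conclude co-location. Suppose the moving agent has no line of its navigation table that matches its current position $u$; I want to show the other agent is also at $u$. I will split into two top-level cases according to whether that agent is Theseus or Ariadne, and in each case further split according to the current "phase" of the execution, namely whether Theseus has already found a node previously discovered by Ariadne (call this the \emph{Theseus-follows} phase), whether only Ariadne has found a node previously discovered by Theseus (the \emph{Ariadne-follows} phase), or whether neither has happened yet (the \emph{disjoint} phase). The last of these three is ruled out by the previous claim, since in the disjoint phase both agents run an honest Trémaux DFS on a connected graph and hence always have a legal move; so only the first two phases are live.

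First I would handle the case where the stuck agent is Theseus. In the Theseus-follows phase, Claim~\ref{claim:pathTtoA} says the edges with tail in $\{\text{E}_{\text{A}}, \text{E}_{\text{AT}}\}$ form a directed path from Theseus's position to Ariadne's. If Theseus is stuck at $u$, then $u$ has no adjacent port matching column~1 of Table~\ref{table:rdv-theseus} — in particular no port marked $\text{B}_{\text{T}}$, $\text{E}_{\text{A}}$, $\text{E}_{\text{AT}}$, $\emptyset$, or $\text{F}_{\text{T}}$. Absence of an outgoing $\text{E}_{\text{A}}$ or $\text{E}_{\text{AT}}$ port forces $u$ to be the endpoint of that directed path, i.e. $u$ is Ariadne's position, which is the desired rendezvous. (If Ariadne is not currently at a node but mid-move, one appeals to the asynchronous model where a move is atomic, so "Ariadne's position" is well-defined at the start of each round.) In the Ariadne-follows phase where only Ariadne is following Theseus, if Theseus is stuck that means Theseus's own DFS has exhausted all ports at $u$ and $u$ carries no $\text{E}_{\text{A}}/\text{E}_{\text{AT}}$ port either; but Theseus running DFS can only get stuck at its own origin with every incident edge closed, and at that moment the path of edges with tail $\text{E}_{\text{T}}$ (Claim~\ref{claim:pathAtoT}, which places Ariadne at the start of a path ending at Theseus) together with the structure of a completed DFS should force Ariadne to have been driven onto $u$ — I would trace this using the invariant that the E$_{\text{T}}$-tailed edges form a path from Ariadne to Theseus, so if Theseus closes his last edge the only consistent place for Ariadne is $u$ itself.

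Next I would handle the case where the stuck agent is Ariadne. In the Ariadne-follows phase (Theseus not yet on an Ariadne-node), Claim~\ref{claim:pathAtoT} gives a directed $\text{E}_{\text{T}}$-path from Ariadne to Theseus; Ariadne stuck at $u$ means no incident port in $\{\text{B}_{\text{A}}, \text{E}_{\text{T}}, \text{F}_{\text{AT}}, \emptyset, \text{F}_{\text{A}}\}$, and absence of an outgoing $\text{E}_{\text{T}}$ port forces $u$ to be the end of that path, namely Theseus's position — rendezvous. In the Theseus-follows phase, Ariadne has realized Theseus is on her trail and is retracing via $\text{F}_{\text{AT}}$ then $\text{F}_{\text{A}}$ markers; here I would argue that the $\text{F}_{\text{AT}}$/$\text{F}_{\text{A}}$ markers she laid (equivalently the reversal of her $\text{E}_{\text{A}}$-path) form a directed path from Ariadne back toward Theseus's current location along the shared trail, so Ariadne running out of such markers means she has reached the node where Theseus sits. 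The key supporting fact is that once both agents are "following," they are travelling along one common path (the trail originally laid by Ariadne), one in each direction, with the D/D$'$ marks closing the segment between them, so the segment of once-or-twice-traversed edges between them shrinks each round and vanishes exactly when they meet — an analogue of Claim~\ref{claim:path} for the rendezvous tables.

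The main obstacle will be the Theseus-follows phase with Ariadne stuck: I must verify that when Theseus is chasing Ariadne along the $\text{E}_{\text{A}}$-trail while Ariadne is simultaneously retracing it via $\text{F}_{\text{AT}}/\text{F}_{\text{A}}$, the markers are bookkept so that neither agent ever "falls off" the trail or gets a spurious move, and that the gap between them is exactly the stretch of $\text{F}_{\text{AT}}$-or-$\text{F}_{\text{A}}$-marked edges — so that Ariadne being stuck (no such edge incident) is equivalent to the gap being empty. This requires stating and inductively proving a clean invariant about the relative positions of the two agents on the shared trail and about which of the marks $\text{E}_{\text{A}}, \text{E}_{\text{AT}}, \text{F}_{\text{AT}}, \text{F}_{\text{A}}, \text{D}, \text{D}'$ sit on which segment; with that invariant in hand, each of the cases above closes in one or two lines. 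I would isolate that invariant as a separate lemma before proving Claim~\ref{claim:finalrdv}.
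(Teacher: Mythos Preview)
Your case analysis is in the right spirit and would eventually close, but it is substantially more elaborate than what the paper does, and the extra machinery you propose is not needed. The paper does \emph{not} split on which agent is stuck; it splits only on the two phases (Theseus has reached an Ariadne-node, or only Ariadne has reached a Theseus-node), and in each phase it uses the single existing path claim to argue that \emph{both} agents have a legal move whenever they are not co-located. The key observation you are missing is that Claims~\ref{claim:pathTtoA} and~\ref{claim:pathAtoT} constrain the markers at \emph{both} endpoints of the path simultaneously: the tail end carries an $\text{E}_\ast$ marker (giving the follower a move) and the head end carries a marker in $\{\text{F}_{\text{A}},\text{F}_{\text{AT}},\text{B}_{\text{A}}\}$ in phase~(1), respectively $\{\text{F}_{\text{T}},\text{B}_{\text{T}}\}$ in phase~(2), which matches a row of the leader's table. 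So if the agents are not co-located the path is nonempty and each agent sees an admissible port. That is the entire proof.

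Two specific places where your plan overshoots. First, your proposed new invariant for the ``Theseus-follows, Ariadne stuck'' case is unnecessary: Claim~\ref{claim:pathTtoA} already tells you that the last edge of the $\{\text{E}_{\text{A}},\text{E}_{\text{AT}}\}$-path enters Ariadne through a port in $\{\text{F}_{\text{A}},\text{F}_{\text{AT}},\text{B}_{\text{A}}\}$, each of which appears in Ariadne's table, so she is never stuck unless the path is empty. Second, your assertion that in the Ariadne-follows phase ``Theseus running DFS can only get stuck at its own origin'' is not something you can simply import from single-agent DFS, because Ariadne is simultaneously rewriting Theseus's $\text{F}_{\text{T}}$ markers to $\text{F}_{\text{AT}}$ as she follows him; the clean way to handle this is again via the head-marker of the $\text{E}_{\text{T}}$-path rather than via a DFS-termination argument. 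None of this makes your outline wrong, but collapsing to the paper's two-case argument saves you the auxiliary lemma and the four-way split.
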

\begin{proof} 
We assume that one of Ariadne or Theseus does not have a move prescribed by their navigation table. By previous claim, it must be the case that either (1) Theseus has found a node earlier discovered by Ariadne or (2) Theseus has never found a node earlier discovered by Ariadne but Ariadne has found a node earlier discovered by Theseus. 

(1) In the first case, if the agents are not co-located by Claim \ref{claim:pathTtoA} there are markers in \{E\textsubscript{A}, E\textsubscript{AT}\} adjacent to Theseus, and markers in \{F\textsubscript{A}, F\textsubscript{AT}, B\textsubscript{A}\} adjacent to Ariane, thus both agents have a possible move in their navigation table.

(2) In the second case, by Claim \ref{claim:pathAtoT} if the agents are not co-located, then there is a marker E\textsubscript{T} adjacent to Ariadne and markers in \{F\textsubscript{T},  B\textsubscript{T}\} adjacent to Theseus. Thus, both agents have a possible move in their navigation table. 

The agents are thus on the same node when the algorithm stops. 
\end{proof}

Claim \ref{claim:finalrdv} ends the formal proof of correctness, by showing that both agents have a prescribed move in their navigation table, at least until the rendezvous.

\begin{remark} We notice that the total number of moves before rendezvous can be reduced from $3m$ to $2m+n-1$. It suffices to observe that the set of edges which may be traversed three times are initially marked E\textsubscript{T}$\leftrightarrow$F\textsubscript{T} and that there can be at most $n-1$ such edges. 
\end{remark}
\subsection{Synchronous rendezvous}
We now translate the above algorithm to the synchronous setting. We shall not change the navigation tables, but for simplicity we assume that rendezvous is achieved if Theseus and Ariadne travel through the same edge in opposite directions at the same synchronous round. This assumption is easily relaxed in Remark \ref{rem-rdv}.

\begin{theorem}\label{th:rdv-synchronous}
In the synchronous setting, Ariadne and Theseus achieve rendezvous in at most $\ceil{3m/2}$ time-steps, where $m$ is the number of edges of the graph.
\end{theorem}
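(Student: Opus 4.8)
The plan is to mirror the argument used for synchronous exploration (Theorem~\ref{th:explo-synchronous}) and for asynchronous rendezvous. Recall that in the asynchronous analysis we showed that the algorithm terminates in at most $3m$ moves (and in fact $2m+n-1$ by the Remark), where a \emph{move} is a single edge-traversal by one of the two agents. The key observation is that in the synchronous setting, each round consists of (at most) two such moves, one per agent, so that the number of synchronous rounds before rendezvous is at most half the total number of moves. The bound $3m/2$ (rounded up) then follows immediately from the $3m$ bound on the total number of moves, provided the correctness and counting claims from the asynchronous section still hold round-by-round in the synchronous model.

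First I would verify that the two structural invariants underlying the asynchronous proof — the edge-counting claim (each marker set corresponds to a fixed traversal count, so no edge is traversed more than three times) and the path claims (Claims~\ref{claim:pathTtoA} and~\ref{claim:pathAtoT}, which guarantee that a stuck agent must be co-located with the other) — survive the passage to simultaneous moves. As in the proof of Claim~\ref{claim:path} in the synchronous setting, the only delicate case is when Ariadne and Theseus traverse the \emph{same} edge in opposite directions during the same round. I would handle this exactly as before: such a simultaneous double-traversal can be simulated by two consecutive asynchronous moves of a single agent along that edge (out and back), which leaves both passages marked $D$ (or $D'$); the resulting marker configuration agrees with the genuine synchronous outcome up to swapping the two agents' positions, which is immaterial. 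Since we have additionally declared that traversing a common edge in opposite directions counts as a rendezvous, this case is in fact resolved in our favour and needs no further bookkeeping.

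Next I would carry out the counting. By the asynchronous analysis, the total number of edge-traversals performed by the two agents before the algorithm halts is at most $3m$. In the synchronous model, at every round prior to rendezvous both agents are active and each performs exactly one traversal, so two traversals are consumed per round; hence the number of rounds is at most $\lceil 3m/2 \rceil$. (The ceiling absorbs the possibility of a final round in which the rendezvous itself occurs, and the parity of the traversal count.) Combined with Claim~\ref{claim:finalrdv}, which ensures that the only way the process stops is by the two agents becoming co-located (now also including the opposite-traversal case), this yields the stated bound.

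The main obstacle I anticipate is not the counting but ensuring that the correctness claims remain airtight when \emph{both} agents are simultaneously in the special regimes — e.g. both following trails, or one retracing via $F_{AT}$ while the other advances — since the asynchronous claims were phrased in terms of ``the first round when Theseus lands on a node discovered by Ariadne,'' and in the synchronous model two such events could in principle be triggered in the same round. I would rule this out by noting that while the agents have disjoint itineraries they each run an independent Trémaux search on vertex sets that, by the path invariants, are linked into a single path between the two agents; so the first round at which the itineraries meet is well-defined and, as argued for Claim~\ref{claim:path}, can be reduced to the asynchronous case by sequentializing that round's two moves. After that round the roles are fixed and the remaining analysis is identical to the asynchronous one. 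This completes the plan.
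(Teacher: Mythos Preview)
Your proposal is correct and rests on the same core idea as the paper: a synchronous round in which rendezvous does not occur can be sequentialized into two asynchronous moves, so the $3m$ asynchronous bound yields $\lceil 3m/2\rceil$ synchronous rounds. The paper carries this out more directly, by a single induction showing that after $t$ rendezvous-free synchronous rounds the marker configuration and agent positions coincide with some $2t$-move asynchronous run (the inductive step being exactly your sequentialization, noting that if neither agent lands on the other's node then moving one first does not affect the other's read); this avoids re-verifying Claims~\ref{claim:pathTtoA}--\ref{claim:finalrdv} separately and makes your worries about simultaneous special-regime transitions disappear automatically.
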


\begin{proof}
It suffices to observe that if Theseus and Ariadne have not achieved synchronous rendezvous in $t$ time-steps, the state of the markers in the graph and the position of the agents is the same as that obtained by a run of the asynchronous setting for $2t$ moves. Indeed, assume that the property is true at time step $t$. The agents then synchronously choose an adjacent port. Either the agents will achieve rendezvous at that round, or it is possible to asynchronously first move one of the agent to its destination, which is not the position of the other agent, and only then to move the other agent at its destination. This proves the property at time $t+1$ and finishes the proof of Theorem \ref{th:rdv-synchronous}.
\end{proof}

\begin{remark}\label{rem-rdv} The assumption that rendezvous is achieved when the agents meet inside an edge is easy to relax. Indeed, it suffices to note that if both agents (using Table \ref{table:rdv-Ariadne} and \ref{table:rdv-theseus}) traverse one edge in opposite directions at the same time-step the markers that they observe at the other endpoint of the edge will be inconsistent with the marker that they observe at the initial endpoint of the edge (because the edge has been used one extra-time in the meanwhile). Therefore, both agent realize that they have just traversed the same edge in both directions. One of the agents (say, Theseus) can stop, while the other (say, Ariadne) backtracks. Rendezvous in this sense just requires one extra time-step.
\end{remark}


\section{Generalizations}\label{sec:discussion}
In this section, we show that our algorithms immediately adapt to natural extensions of the mobile computing model defined in Section \ref{sec:navigation-model}, and providing matching lower-bounds in this setting.
 
 \subsection{Weighted graphs}\label{sec:weighted}
We observe that the problem of collective exploration or rendezvous can be extended to weighted graphs. For any edge $e\in E$, we denote by $w_e$ the weight (or length) of that edge, which corresponds to the cost paid by an agent which traverses $e$. We also denote by $L = \sum_{e\in E}w_e$ the sum of all total edge lengths. The goal of collective graph exploration (resp. rendezvous) then becomes to traverse all edges of the graph (resp. to meet somewhere in the graph) while paying a limited total cost. Since the above proofs actually bound the maximum number times each edge is traversed by $2$ in the case of exploration (resp. by $3$ in the case of rendezvous), we immediately have the following results. 

\begin{proposition}\label{prop:explo-weighted}
    The two agents using Table \ref{table:1} perform asynchronous collective exploration in weighted graph, while paying a total cost of at most $2L$.
\end{proposition}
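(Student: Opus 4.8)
The plan is to derive Proposition~\ref{prop:explo-weighted} as an essentially immediate corollary of Theorem~\ref{th:asynchronous} together with the structural claims that underlie it. The key observation is that the entire analysis of the exploration algorithm on Table~\ref{table:1} --- Claims~\ref{claim:number}, \ref{claim:path}, and~\ref{claim:finish} --- is purely combinatorial: it tracks only the markers on passages and the positions of the agents, and never refers to the unit-cost assumption. None of these invariants change when edges are given weights, because an agent's local decision (which column of Table~\ref{table:1} applies) depends only on the markers it reads, not on edge lengths. So the first step is to note that running the same navigation table on a weighted graph produces exactly the same sequence of moves (same edges traversed, same order) as on the underlying unweighted graph.

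Next I would invoke Claim~\ref{claim:number}: at every round each edge has been traversed zero, one, or two times, according to whether both its passages are unmarked, both are marked in $\{$B, E, F$\}$, or both are marked D; and a passage marked D is never used again. Hence each edge $e$ is traversed at most twice over the whole execution, and by Claim~\ref{claim:finish} the algorithm does terminate (with both agents co-located and every edge traversed exactly twice). The total cost paid is $\sum_{e \in E} (\text{number of traversals of } e)\cdot w_e \le \sum_{e\in E} 2 w_e = 2L$, which is the claimed bound. (In fact, since termination forces every edge to be traversed exactly twice, the total cost is exactly $2L$, but the proposition only asks for the upper bound.)

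The writeup is therefore short: restate that the moves of the weighted execution coincide with those of the unweighted one, cite Claim~\ref{claim:number} for the ``at most twice per edge'' bound and Claim~\ref{claim:finish} (or Theorem~\ref{th:asynchronous}) for termination, and sum the per-edge costs. There is no genuine obstacle here --- the only thing to be careful about is making explicit that the navigation table is oblivious to edge weights, so that all the previously established invariants transfer verbatim; once that is said, the cost bound is a one-line summation. If one wanted to be thorough one could also remark that the same reasoning gives the analogous $\ceil{3L/2}$ (resp. $3L$) bounds for the rendezvous algorithm in the synchronous (resp. asynchronous) weighted setting, since there the structural claims likewise bound the number of traversals of each edge by $3$.
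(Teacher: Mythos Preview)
Your proposal is correct and matches the paper's own justification, which simply notes before the proposition that the earlier analysis bounds the number of traversals of each edge by $2$, so the cost is at most $2L$. Your write-up is more explicit (spelling out that the navigation table is weight-oblivious and citing Claims~\ref{claim:number} and~\ref{claim:finish}), but the argument is the same.
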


\begin{proposition}\label{prop:rdv-weighted}
    The two agents using Table \ref{table:rdv-Ariadne} and Table \ref{table:rdv-theseus} perform asynchronous rendezvous in an unknown weighted graph, while paying a total cost of at most $3L$.
\end{proposition}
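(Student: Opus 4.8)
The plan is to observe that Proposition~\ref{prop:rdv-weighted} follows almost immediately from the structural analysis already carried out for the asynchronous rendezvous algorithm in the unweighted setting. The navigation tables of Table~\ref{table:rdv-Ariadne} and Table~\ref{table:rdv-theseus} make decisions based only on the \emph{markers} adjacent to a node and on whether the destination was previously discovered — they never consult edge lengths. Consequently, edge weights play no role in the dynamics of the marker configuration or in which moves are prescribed; they only rescale the cost charged for each traversal. So the first step is to note that the entire correctness argument (Claims through Claim~\ref{claim:finalrdv}) transfers verbatim: Ariadne and Theseus still perform independent Trémaux searches until one lands on a node discovered by the other, the directed paths of \{E\textsubscript{A}, E\textsubscript{AT}, E\textsubscript{ATT}\}-tailed edges and of E\textsubscript{T}-tailed edges behave exactly as before, and the algorithm still halts precisely when the two agents are co-located.

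The second step is to re-examine the counting claim that bounds the number of traversals per edge. The claim preceding the termination paragraph states that an edge is traversed once if it carries a marker in \{B\textsubscript{A}, E\textsubscript{A}, F\textsubscript{A}, B\textsubscript{T}, E\textsubscript{T}, F\textsubscript{T}\}, twice if it carries a marker in \{E\textsubscript{AT}, F\textsubscript{AT}, D\}, and three times if it carries a marker in \{E\textsubscript{ATT}, F\textsubscript{ATT}, D'\}; and since no edge ever leaves this last category once entered, no edge is traversed more than three times. This is a purely combinatorial statement about markers and is insensitive to weights, so it holds unchanged. Summing the cost over all edges, the total cost incurred by the two agents before rendezvous is therefore at most $\sum_{e\in E} 3 w_e = 3L$.

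I would then write the proof as a short paragraph: invoke the unweighted analysis to conclude that the algorithm is correct and terminates (i.e. achieves rendezvous) in the weighted setting exactly as in the unweighted one, since no table entry depends on weights; then invoke the per-edge traversal bound of $3$ to conclude that the total cost is at most $3L$. One may also remark, paralleling the remark after Claim~\ref{claim:finalrdv}, that the sharper bound $2L + L'$ holds, where $L'$ is the sum of weights of edges initially marked E\textsubscript{T}$\leftrightarrow$F\textsubscript{T} (the only edges that can be traversed three times), although this refinement is not needed for the stated proposition.

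The main (and only mild) obstacle is to state cleanly \emph{why} the weighted execution tracks the unweighted one step for step — namely that the asynchronous adversary still controls only the \emph{order} of moves, not their duration in the discrete sense, and that the notion of "move" and the marker transitions are unchanged. This is essentially a matter of careful bookkeeping rather than a genuine difficulty: once one observes that the tables are weight-agnostic, the proposition is immediate, which is exactly the spirit in which the surrounding text ("Since the above proofs actually bound the maximum number times each edge is traversed\ldots") already frames it.
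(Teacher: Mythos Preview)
Your proposal is correct and matches the paper's own treatment: the paper does not give a separate proof but simply notes that the earlier arguments bound the number of traversals of each edge by $3$, which immediately yields the $3L$ cost bound. Your additional observation that the navigation tables are weight-agnostic (so correctness transfers verbatim) makes this explicit and is exactly in the spirit of the paper's one-line justification.
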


\subsection{Continuous moves}\label{sec:continuous}
We now support the claim of the introduction that our model of mobile computing captures the situation where the agent move continuously and the the adversary controls their speeds at all times. 
For this, it is useful and natural to assume that the agents can communicate if they meet inside a given edge -- because one agent could be blocked indefinitely inside an edge. We now make the following claim.

\begin{claim}\label{claim:continous}For any algorithm based on a navigation table, while the robots do not meet inside an edge, the setting where the agents move continuously (and the adversary controls their speed) is equivalent to a run of the setting of Section \ref{sec:navigation-model} where the agents make discrete moves (and the adversary chooses which robot moves).     
\end{claim}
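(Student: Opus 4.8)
The plan is to exhibit an explicit correspondence between a continuous execution (in which the adversary controls the two agents' speeds at all times) and a discrete execution of the model of Section~\ref{sec:navigation-model} (in which the adversary picks, at each round, which agent performs one move). First I would describe the state of a continuous execution at a given instant $t$: each agent is either sitting at a node, or located at some position strictly inside an edge $e$, having last departed from an endpoint of $e$ at which it performed steps \eqref{step1}--\eqref{step3} (choosing and marking the port $p_u$), and not yet having performed \eqref{step4}--\eqref{step5} at the far endpoint. The key observation is that an agent strictly inside an edge has already committed to its choice: since a navigation table prescribes the port $p_u$ based only on the markers read at $u$ in step~\eqref{step1}, nothing the other agent does in the meantime can change the fact that this agent will next arrive at the far endpoint through $p_v$ and update markers there. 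So at any instant the continuous configuration is fully determined by the discrete data ``markers on all passages'' together with, for each agent, either its node or the directed edge it is currently crossing.

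Next I would define the simulating discrete run. Process the continuous timeline in order of the instants at which agents complete an edge crossing (i.e. perform steps \eqref{step4}--\eqref{step5} and thereby arrive at a new node). Between two such completion instants, an agent that begins a new crossing performs steps \eqref{step1}--\eqref{step3} at its current node; I would schedule in the discrete run exactly the sequence of full moves \eqref{step1}--\eqref{step5} corresponding to these crossings, in the order in which the crossings \emph{complete}. The claim is that this discrete schedule produces, after the move corresponding to the $j$-th completed crossing, exactly the marker configuration and agent positions that the continuous execution has at the instant of that $j$-th completion. This follows by induction on $j$: steps \eqref{step1}--\eqref{step3} of a crossing read and write only at the departure node, and by the commitment observation above these reads see the same markers in both executions; steps \eqref{step4}--\eqref{step5} read and write only at the arrival node; and two crossings that are ``in flight'' simultaneously in the continuous execution touch disjoint port sets at their not-yet-reached endpoints, so serializing them in completion order is harmless. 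Conversely, any discrete run is realized by a continuous run in which the adversary moves one agent at full speed across an edge while freezing the other.

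The main obstacle — and the reason the hypothesis ``while the robots do not meet inside an edge'' is needed — is the degenerate case where both agents are simultaneously inside the \emph{same} edge $e$, heading toward each other. In the continuous model they would collide, which is precisely the excluded event; but even setting collision aside, this is the one situation where the ``disjoint port sets'' argument breaks, since the two in-flight crossings are $u\xrightarrow{e}v$ and $v\xrightarrow{e}u$ and they will update markers at each other's destination endpoints. Under the stated hypothesis this never arises before the agents meet, so the inductive step goes through; I would remark that this is exactly the same subtlety already handled in the synchronous analyses (the ``second line of Table~\ref{table:1}'' discussion, and Remark~\ref{rem-rdv}), and that allowing in-edge communication is what lets the agents detect and resolve it. A secondary, purely bookkeeping point to check is that if several crossings complete at the same continuous instant, any order among them yields the same final configuration (again because they touch disjoint ports), so the discrete schedule is well-defined. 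With these observations the equivalence claimed in Claim~\ref{claim:continous} follows, and in particular the cost bounds $2L$ and $3L$ of Propositions~\ref{prop:explo-weighted} and~\ref{prop:rdv-weighted} transfer verbatim to the continuous setting.
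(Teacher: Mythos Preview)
Your serialization by completion time does not in general reproduce the continuous run. The ``commitment'' observation is correct for an agent already in flight, but it says nothing about what the \emph{other} agent reads at a shared departure node. Concretely: both agents are at the origin $o$, adjacent to $v$ and $w$. The adversary first lets $A$ execute steps \eqref{step1}--\eqref{step3} at $o$ (marking the port toward $v$) and enter the edge $(o,v)$; then lets $B$ execute \eqref{step1}--\eqref{step3} at $o$ (it now sees $A$'s mark and therefore picks the port toward $w$); and finally lets $B$ reach $w$ long before $A$ reaches $v$. Your schedule, ordered by completions, is $(B,A)$. In the corresponding discrete run $B$ moves first from an all-unmarked $o$ and need not make the same choice it made continuously; your invariant ``same marker configuration after the $j$-th completion'' already fails at $j=1$, since in the continuous run two ports of $o$ are marked at that instant whereas only one is after a single discrete move. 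Your ``disjoint port sets at the not-yet-reached endpoints'' remark constrains only the \eqref{step4}--\eqref{step5} sides of concurrent crossings, not their \eqref{step1}--\eqref{step3} sides. More fundamentally, \emph{no} linear ordering of atomic moves \eqref{step1}--\eqref{step5} in the original graph can simulate a continuous interleaving of the shape ``$A$ departs, then $B$ departs and arrives, then $A$ arrives''.

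The paper avoids this by first subdividing every edge at its midpoint. In the subdivided graph an original crossing becomes two discrete moves: one carrying the departure \eqref{step1}--\eqref{step3} at an original endpoint (with a trivial arrival at the new midway node), and one carrying the arrival \eqref{step4}--\eqref{step5} at the other original endpoint. One then lists all ``leave an original node'' and ``reach an original node'' events in their continuous-time order; this yields a legitimate schedule for the discrete model on the subdivided graph, and a straightforward induction shows it reproduces the continuous run. Since the total length $L$ is invariant under subdivision, the cost bounds transfer unchanged. The missing ingredient in your proposal is precisely this decoupling of departures from arrivals.
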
 

\begin{proof} 
Without loss of generality, assume that all edges are split into two sub-edges, with a node in the middle. We will consider the exploration and rendezvous problem in this new graph. Observe that our guarantees are preserved, because the sum of all edges lengths $L$ is invariant by this transformation (see Section \ref{sec:weighted} for the definition of $L$). We now let the adversary control the (continuous) speeds of the agents at all times until they meet inside an edge of the original graph at some time $t$. We say that an event is \textit{triggered} at every instant when an agent leaves a node of the original graph (i.e. not a midway node), or attains a node in the original graph. 
 We denote by
$t^{(A)}_1,t^{(A)}_2, \dots, t^{(A)}_{m_A}$ all the instants (before $t$) at which Ariadne triggers an event. And we define $t^{(T)}_1,  t^{(T)}_2, \dots, t^{(T)}_{m_T}$ similarly for Theseus. We also define $r_i\in \{A,T\}$ for $i\in \{1,\dots, m_A + m_T\}$ to be the index of the agent that triggers the $i$-th event. We then observe (by an immediate induction in $m_A$ and $m_T$) that the sequence of nodes attained by the agents in the continuous model correspond exactly to the sequence of nodes that would have been attained by the agents if the adversary had attributed the following sequence of discrete moves: $r_1,r_2,\dots, r_{m_A+m_T}$.
\end{proof}
Note that in this model of continuous moves, the \textit{cost} of exploration (resp. rendezvous) becomes equal to the total amount of energy that was afforded to the agents by the adversary before termination, where a unit of energy can be used by an agent to move by a unit distance. We then have the following proposition.  

\begin{proposition}
The guarantees of Proposition \ref{prop:explo-weighted} and Proposition \ref{prop:rdv-weighted} generalize to the setting where the agents move continously (and the adversary controls their speeds).
\end{proposition}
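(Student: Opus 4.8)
The plan is to reduce the continuous-move statement to the weighted-graph results already in hand, using Claim~\ref{claim:continous} as the bridge. First I would invoke Claim~\ref{claim:continous}: while the two agents have not met inside an edge, any continuous execution (with the adversary controlling speeds) corresponds exactly to some asynchronous discrete execution in the subdivided graph, with the same sequence of nodes attained and the same final marker configuration. So the behavior of the algorithm (exploration with Table~\ref{table:1}, or rendezvous with Tables~\ref{table:rdv-Ariadne} and~\ref{table:rdv-theseus}) in the continuous model is faithfully simulated by the discrete asynchronous model on the subdivided graph.

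Next I would observe that subdividing every edge into two sub-edges leaves the total edge length $L = \sum_{e\in E} w_e$ unchanged, so Proposition~\ref{prop:explo-weighted} and Proposition~\ref{prop:rdv-weighted}, applied to the subdivided graph, give the bounds $2L$ (exploration) and $3L$ (rendezvous) on the total cost in the corresponding discrete asynchronous execution. The key point is that the \emph{cost} in the continuous model -- the total energy afforded by the adversary -- equals the total distance traveled by the two agents, which is precisely the total weight of the (sub-)edge traversals counted with multiplicity; since each proof bounds the number of traversals of each edge by $2$ (resp.\ $3$), the energy spent before termination is at most $2L$ (resp.\ $3L$). I would also note that exploration of the subdivided graph is equivalent to exploration of the original graph, and likewise rendezvous, so termination in the simulated discrete model implies termination (with the same cost bound) in the continuous model.

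The one point requiring a little care -- and the main obstacle -- is the boundary case where the agents meet \emph{inside} an edge rather than at a node, since Claim~\ref{claim:continous} only guarantees the equivalence up to that moment. For rendezvous this is benign: meeting inside an edge only makes rendezvous happen sooner, and the energy spent up to that instant is still bounded by the energy of the full simulated discrete execution, hence by $3L$. For exploration, one should argue that an agent crossing paths with the other inside an edge does not disrupt the invariants; but by the subdivision trick this ``inside an edge'' event is just a meeting at the midway node, which the discrete analysis on the subdivided graph already handles (exactly as the synchronous head-on case is handled by the second line of Table~\ref{table:1}, via the equivalence in Claim~\ref{claim:continous}). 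Thus in both cases the continuous cost is dominated by the discrete cost on the subdivided graph, and the stated bounds $2L$ and $3L$ follow. I would close by remarking that this also recovers the unweighted synchronous bounds $m$ and $\ceil{3m/2}$ as the special case $w_e = 1$ combined with Theorems~\ref{th:explo-synchronous} and~\ref{th:rdv-synchronous}.
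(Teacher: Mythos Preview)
Your approach is essentially the paper's: invoke Claim~\ref{claim:continous} to transfer the discrete bounds, and then deal separately with the one situation that claim does not cover, namely the agents meeting inside an edge. For rendezvous you handle this exactly as the paper does (meeting inside an edge \emph{is} rendezvous, so the bound only improves).

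The one place you diverge is the exploration boundary case. The paper disposes of it in a single sentence: it asserts that, for the algorithm of Table~\ref{table:1}, exploration must already be complete at the instant the two agents meet inside an edge, so nothing further needs to be said. You instead try to push the meeting to a midway node via the subdivision and then appeal to the synchronous head-on analysis (second line of Table~\ref{table:1}). That detour is a bit shaky: a continuous meeting inside an original edge need not occur exactly at the midway node (it can happen inside one of the two half-edges), and Claim~\ref{claim:continous} gives an \emph{asynchronous} discrete equivalent, so invoking the synchronous head-on line is not quite the right hook. Your argument would be cleaner if you adopted the paper's direct assertion for this case rather than routing through the synchronous analysis. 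The extra closing remark about recovering the synchronous bounds $m$ and $\ceil{3m/2}$ is fine but not part of this proposition.
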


\begin{proof}
    By the definition of the problem of rendezvous, the task is completed when the agents meet inside an edge. Therefore, by Claim \ref{claim:continous}, our guarantees of the model with discrete moves transfer to the model with continuous moves. The argument is similar for exploration since the algorithm based on Table \ref{table:1} must have completed exploration whenever the two agents meet inside an edge. 
\end{proof}

\subsection{Lower bounds}
We now show that our guarantees are met with matching lower-bounds in the setting of Section \ref{sec:continuous}. The fact that the graphs are weighted and that energy/movement is attributed by an adversary is not crucial to the demonstration, but it greatly simplifies the proof (especially for rendezvous). In the following statement, $L$ denotes the sum of all edge lengths in a weighted graph. 

\begin{proposition} Any exploration algorithm with two agents of cost bounded by $\alpha L$ in weighted graphs satisfies $\alpha \geq 2$. Any deterministic asynchronous rendezvous algorithm with two agents cost bounded by $\alpha L$ in weighted graphs satisfies $\alpha \geq 5/2$ and $\alpha \geq 3$ if the agents have finite memory.
\end{proposition}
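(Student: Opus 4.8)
The plan is to prove three separate lower bounds, each by exhibiting a family of weighted graphs together with an adversarial speed schedule that forces the claimed cost. Throughout, I will take advantage of the fact that the adversary controls when each agent moves, so the agents effectively receive no timing information; the only information they gain is the local markers they read.

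For the exploration bound $\alpha \ge 2$, I would use a single long path (or a ``spider'' with long legs emanating from the common start node). The idea is that until an edge is traversed, neither agent knows whether it is a dead end or leads to the rest of the graph. Consider the graph that is a path of total length $L$ with both agents starting at one endpoint. Run the algorithm and let the adversary move whichever agent is ``ahead'' along the path. At the moment one agent is about to traverse the final edge $e$ reaching the far endpoint, freeze the configuration: the adversary can instead have placed a pendant edge of tiny length $\varepsilon$ at every node the agents have already visited. Since the two executions are indistinguishable up to this point, the algorithm must have both agents somewhere on the already-explored portion, and each of the (arbitrarily many) short pendant edges still has to be traversed; summing the back-and-forth cost to service these pendants forces total cost approaching $2L$. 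More carefully, one can argue directly: an adversary that always moves the lagging agent guarantees that the ``exploration frontier'' advances only when an agent physically pushes it, and any edge behind the frontier that is a pendant must be entered and exited, so the total movement is at least (sum of lengths on the main trunk) plus twice (sum of pendant lengths), which can be made arbitrarily close to $2L$. I expect this to be the easiest of the three.

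For the asynchronous rendezvous bound $\alpha \ge 5/2$, I would use a path of length $L$ with the two agents starting at the two \emph{endpoints}. The adversary's strategy is to move the agents so as to keep them symmetric (or to exploit whatever asymmetry the two algorithms have). Intuitively, each agent, by the time it has explored its ``half'' of the path, has paid $L/2$; but since neither agent can tell where the other is, the adversary can arrange that the agents ``just miss'' each other, forcing one agent to traverse most of the path and then backtrack. A clean way: let the adversary run one agent all the way across the path (cost $L$), then, since the agents have not met, that agent is at the far endpoint where the other agent started; but the other agent, having been given only a little movement, has wandered a bit and is no longer there, so additional movement of cost $\approx L/2$ on one side is needed — totaling $\approx 5L/2$. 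The precise construction will need a small gadget (e.g., a tiny fork near each endpoint) so that an agent exploring near its origin is forced to commit to one branch, which the adversary then declares ``wrong.'' The main subtlety, and where I expect the real work to be, is making the indistinguishability argument airtight for \emph{two different deterministic algorithms}: one must track what each algorithm can have learned and show the adversary retains enough freedom to defeat it.

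For the strengthening to $\alpha \ge 3$ when both agents have finite memory, the plan is to combine the above path construction with a pumping/Ramsey-type argument exploiting the bounded memory. With finitely many memory states, an agent traversing a long uniform segment of the path must eventually cycle through its state sequence, so on a sufficiently long homogeneous segment its behavior is ``periodic'' and in particular its net progress per period is a fixed rational; the adversary can then choose segment lengths (and place indistinguishable decoy branches at the right positions) so that both agents are forced into essentially worst-case back-and-forth behavior, pushing the constant from $5/2$ up to $3$. Concretely, I would make the graph a path with many long homogeneous blocks; by finite memory each agent behaves identically on identical blocks, so if it ever ``wastes'' movement on one block it wastes proportionally on all, and the adversary schedules moves so that the total wasted movement is $\approx L$ on top of the $\approx 2L$ needed to sweep the path between the two origins. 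The hard part here is the bookkeeping: one has to quantify ``wasted movement'' per block, invoke the pigeonhole/pumping lemma uniformly over both (distinct) finite-memory algorithms, and verify that the decoy branches can be inserted without changing $L$ by more than a negligible amount. I would structure the final write-up as: (i) the exploration construction, (ii) the $5/2$ rendezvous construction with general agents, (iii) the finite-memory pumping refinement to $3$, flagging step (iii) as the technically heaviest.
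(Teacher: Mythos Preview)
Your exploration argument is workable but over-engineered: the paper simply takes a single path of length $L$ with both agents at the same endpoint and has the adversary give both agents equal speed; since each must move distance $L$ to reach the far end, the total cost is $2L$. No pendants, no freezing, no indistinguishability argument is needed.

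Your rendezvous plan, however, has a genuine gap. A bare path with the agents at opposite endpoints does not yield $5/2$: the only move from an endpoint is inward, so both agents head toward each other and meet after total cost $L$. Your suggested fix (``tiny forks'' near each endpoint) adds only $O(\varepsilon)$ to the cost and cannot push the ratio past~$1$; and your alternative of running one agent all the way across while the other ``wanders a bit'' breaks down because any wandering toward the first agent causes them to meet early, while wandering away is impossible at an endpoint. The paper's construction is quite different: it uses a two-node multigraph, the \emph{cycle} (two parallel edges between $A$ and $T$) versus the \emph{broken cycle} (one of the two edges is cut in the middle, leaving two dead-end pendants). The adversary first lets each agent traverse one full edge; since the algorithms are deterministic it can label ports so that the agents take \emph{different} edges and swap positions. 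At this point neither agent can distinguish the cycle from the broken cycle, so each must now take the edge it has not yet used (otherwise it risks traversing one edge three times in the broken-cycle instance, immediately giving $\alpha\ge 3$). After this second traversal both agents are back at their origins having spent $2L$ total and only now learn they are in a cycle; meeting on the shorter edge costs a further $L/2$ in the worst case, giving $5L/2$.

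Your finite-memory step also misses the actual mechanism. No pumping or Ramsey argument is needed: the paper's point is simply that with finite memory an agent cannot store and compare two arbitrary real edge lengths, so after returning to its origin it cannot reliably choose the shorter of the two cycle edges. It must therefore pick one by some tie-breaking rule, and the adversary makes that edge the long one (arbitrarily longer than the other), so it is traversed a third time and $\alpha\ge 3$. Your periodic-behavior-on-long-blocks idea is aimed at a different phenomenon and would be both harder and unnecessary here.
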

\begin{proof}
    The exploration lower-bound is simple. Consider a line of length $L$. If both agents receive the same speed starting from one end of the line, they (collectively) spend an energy of $2L$ to traverse it. 

    For rendezvous, we consider two classes of graphs illustrated in Figure \ref{fig:cycle-broken}: the cycle graphs and the `broken cycle' graphs. In the figure, the starting points of Ariadne and Theseus are denoted by A and T. Importantly, the lengths of all edges are arbitrary and are initially unknown to the agents. We consider an adversary which starts by letting both agents move along one edge. The agents have not better choice than to move along an arbitrary edge (not moving is a costly option, because the adversary could provide energy only to the immobile agent indefinitely). Since the algorithm is deterministic, we can assume that the two agents are in the circle and have moved along opposite edges. At this stage, the agents do not know whether the graph is a cycle or a broken cycle, they merely know the length of the edge that they have traversed, and that they attained the initial position of the other agent. The adversary then provides more energy to both agent. The agents have no better choice than to choose the edge which they have not yet traversed: indeed, if they are in the broken cycle, this is the only way to get to the other agent without risking to traverse one arbitrary edge three times, which would immediately lead to $\alpha \geq 3$. At this point, both agents are back on their initial position and realize that the graph is a cycle. If the agents are able to compare edge lengths (because they have enough memory to store large numbers), the agents can choose to meet inside the shortest edge. In this case, the worst situation is when both edges have the same length $\ell$, leading to a total cost of $5\ell = 5L/2$. If the agents do not have enough memory to compare edge lengths, then they must pick one arbitrary edge on which to meet, using their identifier to break the tie. The adversary will make sure that this edge is the longest (by far), and since it gets traversed three times we have $\alpha \geq 3$.
\end{proof}
\begin{remark}We note that the lower-bound of $5/2 L$ is matched by a variant of our rendezvous algorithm (which uses unbounded memory). Denoting by $u$ the first node that was discovered by Ariadne on the trail of Theseus, and by $v$ the first node that was discovered by Theseus on the trail of Ariadne, we observe that the trails of Theseus and Ariadne form a cycle containing $u$ and $v$. Once an agent has identified both $u$ and $v$ it could decide to traverse the shortest of both paths between $u$ and $v$ (andpossibly continuing its way on the cycle until it meets the other agent).    
\end{remark}

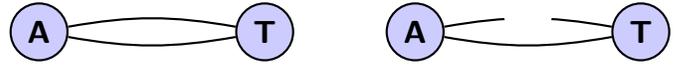
\begin{figure}
    \centering
\begin{tikzpicture}[
    thick,
    main node/.style={circle, fill=blue!20, draw, font=\sffamily\large\bfseries}
  ]

  \node[main node] (A) at (0,0) {A};
  \node[main node] (T) at (3,0) {T};
    
    \draw (A) edge[bend right=10] node[midway, below, font=\small] {} (T);
  \draw (A) edge[bend left=10] node[midway, above,font=\small] {} (T);

  \node[main node] (Ap) at (5,0) {A};
  \node[main node] (Tp) at (8,0) {T};
    
    \draw (Ap) edge[bend right=10] node[midway, below, font=\small] {} (Tp);
  \draw (Ap) edge[bend left=10] node[midway, above,font=\small] {} (Tp);

    \node[circle,fill=white!20] (W) at (6.35,0.2) {};
    \node[circle,fill=white!20] (W) at (6.65,0.2) {};
\end{tikzpicture}
\caption{The cycle and the broken cycle, edge lengths are adversarial and unknown.}
    \label{fig:cycle-broken}
\end{figure}

\subsection*{Conclusion}
In this paper, we studied two problems of mobile computing, exploration and rendezvous, for two agents in an unknown graphs. For both problems, we provide algorithms that improve over the naive strategies that are based on depth-first search. Our guarantees hold for a general model of asynchrony, and generalize to weighted graphs for which they have matching lower-bounds.

\subsection*{Acknowledgements}
The author thanks Andrzej Pelc for his valuable feedback and suggestions, as well as the Argo team at Inria.

\bibliography{sample}

\begin{thebibliography}{39}
\providecommand{\natexlab}[1]{#1}
\providecommand{\url}[1]{\texttt{#1}}
\expandafter\ifx\csname urlstyle\endcsname\relax
  \providecommand{\doi}[1]{doi: #1}\else
  \providecommand{\doi}{doi: \begingroup \urlstyle{rm}\Url}\fi

\bibitem[Klein(2018)]{klein2018}
Daniel Klein.
\newblock Mighty mouse, 2018.
\newblock URL \url{https://www.technologyreview.com/}.

\bibitem[Lucas(1883)]{lucas1883recreations}
{\'E}douard Lucas.
\newblock \emph{R{\'e}cr{\'e}ations math{\'e}matiques}, volume~3.
\newblock Gauthier-Villars, 1883.

\bibitem[Miyazaki et~al.(2009)Miyazaki, Morimoto, and
  Okabe]{miyazaki2009online}
Shuichi Miyazaki, Naoyuki Morimoto, and Yasuo Okabe.
\newblock The online graph exploration problem on restricted graphs.
\newblock \emph{IEICE transactions on information and systems}, 92\penalty0
  (9):\penalty0 1620--1627, 2009.

\bibitem[Fraigniaud et~al.(2006)Fraigniaud, Gasieniec, Kowalski, and
  Pelc]{fraigniaud2006collective}
Pierre Fraigniaud, Leszek Gasieniec, Dariusz~R. Kowalski, and Andrzej Pelc.
\newblock Collective tree exploration.
\newblock \emph{Networks}, 48\penalty0 (3):\penalty0 166--177, 2006.
\newblock \doi{10.1002/net.20127}.
\newblock URL \url{https://doi.org/10.1002/net.20127}.

\bibitem[Brass et~al.(2014)Brass, Vigan, and Xu]{brass2014improved}
Peter Brass, Ivo Vigan, and Ning Xu.
\newblock Improved analysis of a multirobot graph exploration strategy.
\newblock In \emph{13th International Conference on Control Automation Robotics
  {\&} Vision, {ICARCV} 2014, Singapore, December 10-12, 2014}, pages
  1906--1910. {IEEE}, 2014.
\newblock \doi{10.1109/ICARCV.2014.7064607}.
\newblock URL \url{https://doi.org/10.1109/ICARCV.2014.7064607}.

\bibitem[Pelc(2019)]{pelc2019deterministic}
Andrzej Pelc.
\newblock Deterministic rendezvous algorithms.
\newblock In \emph{Distributed Computing by Mobile Entities: Current Research
  in Moving and Computing}, pages 423--454. Springer, 2019.

\bibitem[Tarry(1895)]{tarry1895probleme}
Gaston Tarry.
\newblock Le probleme des labyrinthes.
\newblock \emph{Nouvelles annales de math{\'e}matiques: journal des candidats
  aux {\'e}coles polytechnique et normale}, 14:\penalty0 187--190, 1895.

\bibitem[Even(2011)]{even2011graph}
Shimon Even.
\newblock \emph{Graph algorithms}.
\newblock Cambridge University Press, 2011.

\bibitem[Golomb and Baumert(1965)]{golomb1965backtrack}
Solomon~W Golomb and Leonard~D Baumert.
\newblock Backtrack programming.
\newblock \emph{Journal of the ACM (JACM)}, 12\penalty0 (4):\penalty0 516--524,
  1965.

\bibitem[Tarjan(1972)]{tarjan1972depth}
Robert Tarjan.
\newblock Depth-first search and linear graph algorithms.
\newblock \emph{SIAM journal on computing}, 1\penalty0 (2):\penalty0 146--160,
  1972.

\bibitem[Aggarwal and Anderson(1987)]{aggarwal1987random}
Alok Aggarwal and Richard Anderson.
\newblock A random nc algorithm for depth first search.
\newblock In \emph{Proceedings of the nineteenth annual ACM symposium on Theory
  of computing}, pages 325--334, 1987.

\bibitem[Das(2019)]{das2019graph}
Shantanu Das.
\newblock Graph explorations with mobile agents.
\newblock \emph{Distributed Computing by Mobile Entities: Current Research in
  Moving and Computing}, pages 403--422, 2019.

\bibitem[Reingold(2008)]{reingold2008undirected}
Omer Reingold.
\newblock Undirected connectivity in log-space.
\newblock \emph{Journal of the ACM (JACM)}, 55\penalty0 (4):\penalty0 1--24,
  2008.

\bibitem[Yanovski et~al.(2003)Yanovski, Wagner, and
  Bruckstein]{yanovski2003distributed}
Vladimir Yanovski, Israel~A Wagner, and Alfred~M Bruckstein.
\newblock A distributed ant algorithm for efficiently patrolling a network.
\newblock \emph{Algorithmica}, 37:\penalty0 165--186, 2003.

\bibitem[Menc et~al.(2017)Menc, Pajak, and Uzna{\'n}ski]{menc2017time}
Artur Menc, Dominik Pajak, and Przemys{\l}aw Uzna{\'n}ski.
\newblock Time and space optimality of rotor-router graph exploration.
\newblock \emph{Information Processing Letters}, 127:\penalty0 17--20, 2017.

\bibitem[Kalyanasundaram and Pruhs(1994)]{kalyanasundaram1994constructing}
Bala Kalyanasundaram and Kirk~R Pruhs.
\newblock Constructing competitive tours from local information.
\newblock \emph{Theoretical Computer Science}, 130\penalty0 (1):\penalty0
  125--138, 1994.

\bibitem[Megow et~al.(2012)Megow, Mehlhorn, and Schweitzer]{megow2012online}
Nicole Megow, Kurt Mehlhorn, and Pascal Schweitzer.
\newblock Online graph exploration: New results on old and new algorithms.
\newblock \emph{Theoretical Computer Science}, 463:\penalty0 62--72, 2012.

\bibitem[Birx et~al.(2021)Birx, Disser, Hopp, and Karousatou]{birx2021improved}
Alexander Birx, Yann Disser, Alexander~V Hopp, and Christina Karousatou.
\newblock An improved lower bound for competitive graph exploration.
\newblock \emph{Theoretical Computer Science}, 868:\penalty0 65--86, 2021.

\bibitem[Balig{\'a}cs et~al.(2023)Balig{\'a}cs, Disser, Heinrich, and
  Schweitzer]{baligacs2023exploration}
J{\'u}lia Balig{\'a}cs, Yann Disser, Irene Heinrich, and Pascal Schweitzer.
\newblock Exploration of graphs with excluded minors.
\newblock In \emph{31st Annual European Symposium on Algorithms (ESA 2023)}.
  Schloss-Dagstuhl-Leibniz Zentrum f{\"u}r Informatik, 2023.

\bibitem[Akker et~al.(2024)Akker, Buchin, and Foerster]{akker2024multi}
Erik van~den Akker, Kevin Buchin, and Klaus-Tycho Foerster.
\newblock Multi-agent online graph exploration on cycles and tadpole graphs.
\newblock \emph{arXiv preprint arXiv:2402.13845}, 2024.

\bibitem[Papadimitriou and Yannakakis(1991)]{papadimitriou1991shortest}
Christos~H. Papadimitriou and Mihalis Yannakakis.
\newblock Shortest paths without a map.
\newblock \emph{Theor. Comput. Sci.}, 84\penalty0 (1):\penalty0 127--150, 1991.
\newblock \doi{10.1016/0304-3975(91)90263-2}.
\newblock URL \url{https://doi.org/10.1016/0304-3975(91)90263-2}.

\bibitem[Fiat et~al.(1998)Fiat, Foster, Karloff, Rabani, Ravid, and
  Vishwanathan]{fiat1998competitive}
Amos Fiat, Dean~P Foster, Howard Karloff, Yuval Rabani, Yiftach Ravid, and
  Sundar Vishwanathan.
\newblock Competitive algorithms for layered graph traversal.
\newblock \emph{SIAM Journal on Computing}, 28\penalty0 (2):\penalty0 447--462,
  1998.

\bibitem[Bubeck et~al.(2022)Bubeck, Coester, and Rabani]{bubeck2022shortest}
S{\'e}bastien Bubeck, Christian Coester, and Yuval Rabani.
\newblock Shortest paths without a map, but with an entropic regularizer.
\newblock In \emph{2022 IEEE 63rd Annual Symposium on Foundations of Computer
  Science (FOCS)}, pages 1102--1113. IEEE, 2022.

\bibitem[Bubeck et~al.(2023)Bubeck, Coester, and Rabani]{bubeck2023randomized}
S{\'e}bastien Bubeck, Christian Coester, and Yuval Rabani.
\newblock The randomized k-server conjecture is false!
\newblock In \emph{Proceedings of the 55th Annual ACM Symposium on Theory of
  Computing}, pages 581--594, 2023.

\bibitem[Brass et~al.(2011)Brass, Cabrera{-}Mora, Gasparri, and
  Xiao]{brass2011multirobot}
Peter Brass, Flavio Cabrera{-}Mora, Andrea Gasparri, and Jizhong Xiao.
\newblock Multirobot tree and graph exploration.
\newblock \emph{{IEEE} Trans. Robotics}, 27\penalty0 (4):\penalty0 707--717,
  2011.
\newblock \doi{10.1109/TRO.2011.2121170}.
\newblock URL \url{https://doi.org/10.1109/TRO.2011.2121170}.

\bibitem[Cosson and Massouli{\'e}(2024)]{cosson2024collective}
Romain Cosson and Laurent Massouli{\'e}.
\newblock Collective tree exploration via potential function method.
\newblock In \emph{15th Innovations in Theoretical Computer Science Conference
  (ITCS 2024)}. Schloss-Dagstuhl-Leibniz Zentrum f{\"u}r Informatik, 2024.

\bibitem[Ortolf and Schindelhauer(2014)]{ortolf2014recursive}
Christian Ortolf and Christian Schindelhauer.
\newblock A recursive approach to multi-robot exploration of trees.
\newblock In \emph{International Colloquium on Structural Information and
  Communication Complexity}, pages 343--354. Springer, 2014.

\bibitem[Cosson(2024)]{cosson2024breaking}
Romain Cosson.
\newblock Breaking the k/log k barrier in collective tree exploration via
  tree-mining.
\newblock In \emph{Proceedings of the 2024 Annual ACM-SIAM Symposium on
  Discrete Algorithms (SODA)}, pages 4264--4282. SIAM, 2024.

\bibitem[Dereniowski et~al.(2013)Dereniowski, Disser, Kosowski, Pajak, and
  Uznanski]{dereniowski2015fast}
Dariusz Dereniowski, Yann Disser, Adrian Kosowski, Dominik Pajak, and
  Przemyslaw Uznanski.
\newblock Fast collaborative graph exploration.
\newblock In Fedor~V. Fomin, Rusins Freivalds, Marta~Z. Kwiatkowska, and David
  Peleg, editors, \emph{Automata, Languages, and Programming - 40th
  International Colloquium, {ICALP} 2013, Riga, Latvia, July 8-12, 2013,
  Proceedings, Part {II}}, volume 7966 of \emph{Lecture Notes in Computer
  Science}, pages 520--532. Springer, 2013.
\newblock \doi{10.1007/978-3-642-39212-2\_46}.
\newblock URL \url{https://doi.org/10.1007/978-3-642-39212-2\_46}.

\bibitem[Ortolf and Schindelhauer(2012)]{ortolf2012online}
Christian Ortolf and Christian Schindelhauer.
\newblock Online multi-robot exploration of grid graphs with rectangular
  obstacles.
\newblock In Guy~E. Blelloch and Maurice Herlihy, editors, \emph{24th {ACM}
  Symposium on Parallelism in Algorithms and Architectures, {SPAA} '12,
  Pittsburgh, PA, USA, June 25-27, 2012}, pages 27--36. {ACM}, 2012.
\newblock \doi{10.1145/2312005.2312010}.
\newblock URL \url{https://doi.org/10.1145/2312005.2312010}.

\bibitem[Cosson et~al.(2023)Cosson, Massouli{\'e}, and
  Viennot]{cosson2023efficient}
Romain Cosson, Laurent Massouli{\'e}, and Laurent Viennot.
\newblock Efficient collaborative tree exploration with breadth-first
  depth-next.
\newblock In \emph{37th International Symposium on Distributed Computing (DISC
  2023)}. Schloss Dagstuhl-Leibniz-Zentrum f{\"u}r Informatik, 2023.

\bibitem[Higashikawa et~al.(2014)Higashikawa, Katoh, Langerman, and
  Tanigawa]{higashikawa2014online}
Yuya Higashikawa, Naoki Katoh, Stefan Langerman, and Shin{-}ichi Tanigawa.
\newblock Online graph exploration algorithms for cycles and trees by multiple
  searchers.
\newblock \emph{J. Comb. Optim.}, 28\penalty0 (2):\penalty0 480--495, 2014.
\newblock \doi{10.1007/s10878-012-9571-y}.
\newblock URL \url{https://doi.org/10.1007/s10878-012-9571-y}.

\bibitem[Guilbault and Pelc(2011)]{guilbault2011asynchronous}
Samuel Guilbault and Andrzej Pelc.
\newblock Asynchronous rendezvous of anonymous agents in arbitrary graphs.
\newblock In \emph{Principles of Distributed Systems: 15th International
  Conference, OPODIS 2011, Toulouse, France, December 13-16, 2011. Proceedings
  15}, pages 421--434. Springer, 2011.

\bibitem[Kranakis et~al.(2003)Kranakis, Santoro, Sawchuk, and
  Krizanc]{kranakis2003mobile}
Evangelos Kranakis, Nicola Santoro, Cindy Sawchuk, and Danny Krizanc.
\newblock Mobile agent rendezvous in a ring.
\newblock In \emph{23rd International Conference on Distributed Computing
  Systems, 2003. Proceedings.}, pages 592--599. IEEE, 2003.

\bibitem[De~Marco et~al.(2006)De~Marco, Gargano, Kranakis, Krizanc, Pelc, and
  Vaccaro]{de2006asynchronous}
Gianluca De~Marco, Luisa Gargano, Evangelos Kranakis, Danny Krizanc, Andrzej
  Pelc, and Ugo Vaccaro.
\newblock Asynchronous deterministic rendezvous in graphs.
\newblock \emph{Theoretical Computer Science}, 355\penalty0 (3):\penalty0
  315--326, 2006.

\bibitem[Czyzowicz et~al.(2012)Czyzowicz, Pelc, and
  Labourel]{czyzowicz2012meet}
Jurek Czyzowicz, Andrzej Pelc, and Arnaud Labourel.
\newblock How to meet asynchronously (almost) everywhere.
\newblock \emph{ACM Transactions on Algorithms (TALG)}, 8\penalty0
  (4):\penalty0 1--14, 2012.

\bibitem[Dieudonn{\'e} et~al.(2013)Dieudonn{\'e}, Pelc, and
  Villain]{dieudonne2013meet}
Yoann Dieudonn{\'e}, Andrzej Pelc, and Vincent Villain.
\newblock How to meet asynchronously at polynomial cost.
\newblock In \emph{Proceedings of the 2013 ACM symposium on Principles of
  distributed computing}, pages 92--99, 2013.

\bibitem[Panaite and Pelc(1999)]{panaite1999exploring}
Petri{\c{s}}or Panaite and Andrzej Pelc.
\newblock Exploring unknown undirected graphs.
\newblock \emph{Journal of Algorithms}, 33\penalty0 (2):\penalty0 281--295,
  1999.

\bibitem[Disser et~al.(2017)Disser, Mousset, Noever, Skoric, and
  Steger]{disser2017general}
Yann Disser, Frank Mousset, Andreas Noever, Nemanja Skoric, and Angelika
  Steger.
\newblock A general lower bound for collaborative tree exploration.
\newblock In Shantanu Das and S{\'{e}}bastien Tixeuil, editors,
  \emph{Structural Information and Communication Complexity - 24th
  International Colloquium, {SIROCCO} 2017, Porquerolles, France, June 19-22,
  2017, Revised Selected Papers}, volume 10641 of \emph{Lecture Notes in
  Computer Science}, pages 125--139. Springer, 2017.
\newblock \doi{10.1007/978-3-319-72050-0\_8}.
\newblock URL \url{https://doi.org/10.1007/978-3-319-72050-0\_8}.

\end{thebibliography}

\end{document}